\numberwithin{equation}{section}
\newtheorem{thm}{Theorem}[section]
\newtheorem{lem}[thm]{Lemma}
\theoremstyle{definition}\newtheorem{defn}[thm]{Definition}
\theoremstyle{definition}\newtheorem{example}[thm]{Example}
\theoremstyle{definition}\newtheorem*{remark}{Remark}
\newcommand{\norm}[1]{\left\lVert#1\right\rVert} 
\newcommand{\RR}{\mathbb{R}}
\newcommand{\ZZ}{\mathbb{Z}}
\newcommand{\NN}{\mathbb{N}}
\newcommand{\Intr}{\displaystyle\int}
\newcommand{\Sumn}{\displaystyle\sum}
\newcommand{\Suma}{\Sumn_{\alpha\in R_+}}
\newcommand*\diff{\mathop{}\!\mathrm{d}}
\newcommand{\al}{\alpha}
\newcommand{\alx}{\langle\alpha,x\rangle}
\newcommand{\IntN}{\displaystyle\int_{\RR^N}}
\newcommand{\1}{\mathds{1}}
\begin{document}

\title[Infinite dimensional analysis for Dunkl operators]{Infinite dimensional systems of particles with interactions given by Dunkl operators}
\author{Andrei Velicu}
\address{Andrei Velicu, Department of Mathematics, Imperial College London, Huxley Building, 180 Queen's Gate, London SW7 2AZ, United Kingdom \textit{and} Institut de Math\'ematiques de Toulouse, Universit\'e Paul Sabatier, 118 route de Narbonne, 31062 Toulouse, France}
\email{andrei.velicu@math.univ-toulouse.fr}
\date{}

\subjclass[2010]{60J35, 37A25, 47D07, 60K35, 82C22}
\keywords{Infinite dimensional Markov semigroups, Gradient bounds, Ergodicity}

\begin{abstract}
Firstly we consider a finite dimensional Markov semigroup generated by Dunkl laplacian with drift terms. Using gradient bounds we show that for small coefficients this semigroup has an invariant measure. We then extend this analysis to an infinite dimensional semigroup on $(\RR^N)^{\ZZ^d}$ which we construct using gradient bounds, and finally we study the existence of invariant measures and ergodicity properties.
\end{abstract}

\maketitle

\section{Introduction}

In this paper we study infinite dimensional models of interacting particle systems generated by Dunkl operators. More precisely, we study the Markov semigroup on the infinite dimensional space $(\RR^N)^{\ZZ^d}$ generated by
$$ \mathcal{L} := \sum_{l\in\ZZ^d} (L^{(l)} + L_1^{(l)}).$$
The operators $L^{(l)}$ and $L_1^{(l)}$ are defined by
$$ L^{(l)} = \Delta_k^{(l)} + b^{(l)} \cdot \nabla_k^{(l)},$$
and
$$ L_1^{(l)} = e^{(l)} \cdot \nabla_k^{(l)},$$
where $\Delta_k^{(l)}$ and $\nabla_k^{(l)}$ are the usual Dunkl laplacian and gradient on $\RR^N$ which act only on the $l$ component of $\omega = (\omega_l)_{l\in\ZZ^d} \in (\RR^N)^{\ZZ^d}$. The functions $b^{(l)} :(\RR^N)^{\ZZ^d} \to \RR^N$ depend only on the $\omega_l$ component, while $e^{(l)}:(\RR^N)^{\ZZ^d} \to \RR^N$ depend on the components $\omega_j$ with $|j-l|<R$ for a finite $R$ called the range of interactions. In other words, the operators $L^{(l)}$ are operators that only depend on the $\omega_l$ component and so they define finite dimensional models. In the absence of the interaction terms $e^{(l)}$, these evolve independently of each other. 

Dunkl operators are differential-difference operators defined in terms of a root system. They were introduced in \cite{Dunkl1989} to study special functions associated to root systems, but have since been studied extensively and have found applications outside harmonic analysis. One typical application is in mathematical physics where they were used to study quantum many body problems associated to Calogero-Moser-Sutherland models; see \cite{vDV} for more details. In probability theory, Dunkl operators have been used to define Markov processes with remarkable properties; for an overview of probabilistic aspects of Dunkl theory see \cite{GRY}.

To tackle the problem introduced above, we first study the finite dimensional case of a single operator $L^{(l)}$ acting on $\RR^N$, for which a priori the existence of an invariant measure is unknown. To begin with, we prove gradient bounds for the resulting semigroup. An essential trick here is to consider a symmetrised gradient which takes into account the reflections associated to the root system defining Dunkl operators. This allows us to deal with the resulting reflection terms and to obtain appropriate bounds. Using these bounds we then prove the existence of an invariant measure for the semigroup in the case of small coefficients $k_\alpha$. 

This analysis is then extended to an infinite dimensional setting, by introducing interaction terms. In this case, corresponding to the description above, the construction of the semigroup is not obvious and it requires an approximation procedure. This requires gradient bounds similar to the finite dimensional case, but which are technically more involved. The bounds obtained allow us to prove that the semigroup has an invariant measure and study its ergodicity properties. 

Dissipative dynamics for infinite systems of interacting particles has been studied a long time; see the classic references \cite{Liggett} and \cite{GZ} for more details. For a study of Markov semigroups in infinite dimensional settings see \cite{Z1996} in the case of elliptic operators, \cite{DKZ} for subelliptic operators, \cite{XZ2009,XZ2010} for Levy-type operators and \cite{KOZ} for operators of hypocoercive type. 

Similar analysis for Dunkl-type operators in the infinite dimensional setting but in a simplified context corresponding to the very particular case of a root system $A_1$, was studied in \cite{Z2016}. In this paper we extend these results to the case of general root systems in all dimensions.

The structure of the paper is as follows. In section \ref{SEC:prelims} we introduce Dunkl operators and basic results that will be used later in the paper. In section \ref{SEC:finitedim} we deal with the finite dimensional case, and in section \ref{SEC:invmeasfinite} we present the infinite dimensional case.

\section{Preliminaries} \label{SEC:prelims}

In this section we present a very quick introduction to Dunkl theory. For more details see the survey papers \cite{Rosler} and \cite{Anker}.

We define a root system to be a finite set $R\subset \RR^N\setminus \{0\}$ such that $R \cap \alpha \RR = \{ -\alpha, \alpha\}$ and $\sigma_\alpha(R) = R$ for all $\alpha\in R$. Here $\sigma_\alpha$ is the reflection in the hyperplane orthogonal to the root $\alpha$, i.e.,
$$ \sigma_\alpha x = x - 2 \frac{\alx}{\langle \alpha,\alpha \rangle} \alpha.$$
The group generated by all the reflections $\sigma_\alpha$ for $\alpha\in R$ is a finite group, and we denote it by $G$. 

An invariant function is a map $k:R \to [0,\infty)$, denoted $\alpha \mapsto k_\alpha$, such that $k(\alpha)=k(g\alpha)$ for all $g\in G$ and all $\alpha\in R$. It is possible to write the root system $R$ as a disjoint union $R=R_+\cup (-R_+)$ such that $R_+$ and $-R_+$ lie on different sides of a hyperplane through the origin; in this case, we call $R_+$ a positive subsystem. This decomposition is not unique, but the particular choice of positive subsystem does not make a difference in the definitions below because of the $G$-invariance of the coefficients $k$.

From now on we fix a root system in $\RR^N$ with positive subsystem $R_+$, and a multiplicity function $k$. We also assume without loss of generality that $|\alpha|^2=2$ for all $\alpha \in R$. For $i=1,\ldots, N$ we define the Dunkl operator on $C^1(\RR^N)$ by
$$ T_i f(x) = \partial_i f(x) + \Suma k_\alpha \alpha_i \frac{f(x)-f(\sigma_\alpha x)}{\alx}.$$
An important result, due to Dunkl \cite{Dunkl1989}, is that Dunkl operators commute, i.e.,
$$ T_iT_j = T_j T_i \qquad \forall 1 \leq i,j \leq N.$$

We will denote by $\nabla_k=(T_1,\ldots, T_N)$ the Dunkl gradient, and $\Delta_k = \displaystyle\sum_{i=1}^N T_i^2$ will denote the Dunkl laplacian. Note that for $k=0$ Dunkl operators reduce to partial derivatives, and $\nabla_0=\nabla$ and $\Delta_0=\Delta$ are the usual gradient and laplacian. The Dunkl laplacian can be expressed in terms of the usual gradient and laplacian using the following formula:
\begin{equation} \label{Dunkllaplacian}
\Delta_k f(x) = \Delta f(x) + 2\Suma k_\alpha \left[ \frac{\langle \nabla f(x),\alpha \rangle}{\alx} - \frac{f(x)-f(\sigma_\alpha x)}{\alx^2} \right].
\end{equation}

The following representation formula 
\begin{equation} \label{representationformula}
\frac{f(x)- f(\sigma_\alpha x)}{\alx} = \int_0^1 \alpha \cdot \nabla f(x-t\alx \alpha) \diff t,
\end{equation}
which holds for all roots $\alpha \in R$, shows that Dunkl operators   leave invariant classical spaces of functions, for example $\mathcal{S}(\RR^N)$, the space of Schwarz functions, or $C_c^\infty(\RR^N)$, the space of smooth compactly supported functions.

The natural spaces to study Dunkl operators are the weighted $L^p(\mu_k)$, where $\diff \mu_k=w_k(x) \diff x$ and the weight function is defined as
$$ w_k(x) = \prod_{\alpha\in R_+} |\alx|^{2k_\alpha}.$$
This is a homogeneous function of degree
$$ \gamma := \Suma k_\alpha.$$
We will simply write $\norm{\cdot}_p$ for the norm in $L^p(\mu_k)$. With respect to this weighted measure we have the integration by parts formula
$$ \IntN T_i(f) g \diff\mu_k = - \IntN f T_i(g) \diff\mu_k.$$

One of the main differences between Dunkl operators and classical partial derivatives is that the Leibniz rule does not hold in general. Instead, we have the following. 

\begin{lem} \label{generalisedLeibniz}
If one of the functions $f,g$ is $G$-invariant, then we have the Leibniz rule
$$ T_i(fg) = f T_ig + g T_if.$$
In general, we have
$$ T_i(fg)(x) = T_if(x)g(x) + f(x)T_ig(x) - \Suma k_\alpha \alpha_i \frac{(f(x)-f(\sigma_\alpha x))(g(x)-g(\sigma_\alpha x))}{\alx}.$$
\end{lem}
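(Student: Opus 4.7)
The plan is a direct computation starting from the definition of $T_i$. I would write
\begin{equation*}
T_i(fg)(x) = \partial_i(fg)(x) + \Suma k_\alpha \alpha_i \frac{f(x)g(x) - f(\sigma_\alpha x)g(\sigma_\alpha x)}{\alx},
\end{equation*}
apply the ordinary Leibniz rule to $\partial_i(fg)$, and then split the numerator of the difference quotient symmetrically. The algebraic identity I would use is
\begin{equation*}
f(x)g(x) - f(\sigma_\alpha x)g(\sigma_\alpha x) = (f(x)-f(\sigma_\alpha x))g(x) + f(x)(g(x)-g(\sigma_\alpha x)) - (f(x)-f(\sigma_\alpha x))(g(x)-g(\sigma_\alpha x)),
\end{equation*}
which is the usual ``symmetrisation trick''. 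Dividing by $\alx$ and summing over $\alpha\in R_+$, the first two resulting pieces combine with the $\partial_i$-terms to recover $g\, T_i f + f\, T_i g$ (using the definition of $T_i$ applied separately to $f$ and to $g$), while the third piece yields exactly the extra correction term displayed in the statement.

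For the first assertion, if $f$ is $G$-invariant then $f(\sigma_\alpha x) = f(x)$ for every $\alpha \in R$, so $f(x)-f(\sigma_\alpha x) = 0$ and the correction term in the formula above vanishes identically, leaving the classical Leibniz identity. The case of $g$ being $G$-invariant is symmetric.

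No serious obstacle is expected: the only slightly delicate point is the choice of symmetrisation, since an asymmetric splitting of $f(x)g(x) - f(\sigma_\alpha x)g(\sigma_\alpha x)$ would introduce a $g(\sigma_\alpha x)$ factor and thereby fail to reproduce $g(x)\,T_i f(x)$ on the nose. Once the correct symmetric splitting is written down, the rest is a one-line rearrangement.
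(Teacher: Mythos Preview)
Your argument is correct. The paper actually states this lemma without proof (it is a standard identity in Dunkl theory, quoted as background), so there is nothing to compare against; your direct computation via the symmetrisation identity is exactly the expected proof and would be accepted as written.
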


The theory of Dunkl operators is further enriched by the existence of an intertwining operator, which connects Dunkl operators to partial derivatives, and by the construction of the Dunkl kernel, which acts as a generalisation of the classical exponential function. Using these toolds, it is then possible to define a Dunkl transform, which generalises the classical Fourier transform, with which it shares many important properties. The methods in this paper do not make any use of these notions, so we will not go into further details here; a more complete account can be found in the review papers recommended at the beginning of this section.

\section{The finite dimensional case} \label{SEC:finitedim}

We consider the operator
\begin{equation} \label{findimgenerator} 
L:= \sum_{i=1}^N (T_i^2+b_i T_i)=\Delta_k +b \cdot \nabla_k,
\end{equation}
where each $b_i \in C^1(\RR^N)$ for $i=1,\ldots, N$. We will assume throughout this section that $L$ generates a Markov semigroup on $C_b(\RR^N)$, the space of bounded continuous functions on $\RR^N$.  Let $P_t:=e^{tL}$ be the semigroup generated by $L$ and denote for brevity $f_t=P_tf$.  The carr\'e-du-champ operator associated with the operator $L$ is given by
$$ \Gamma_L(f):= \frac{1}{2} \left( L(f^2) - 2 f L f \right).$$
This should not be confused with the carr\'e-du-champ operator associated with the Dunkl laplacian
$$ \Gamma(f) := \frac{1}{2} (\Delta_k(f^2) - 2f \Delta_k f),$$
which can be computed using the following lemma.

\begin{lem} \label{Dunklcarreduchamp}
We have
\begin{equation*} 
\Gamma(f) = |\nabla f|^2 + \Suma k_\alpha 
 	\left(
 		\frac{f(x)- f(\sigma_\alpha x)}{\alx}
 	\right)^2.
\end{equation*}
\end{lem}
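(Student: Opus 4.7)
The plan is to derive the formula directly from the explicit representation (\ref{Dunkllaplacian}) of $\Delta_k$ in terms of the classical Laplacian, gradient, and the reflection terms. This is much cleaner than iterating the generalised Leibniz rule in Lemma \ref{generalisedLeibniz}, which would require applying $T_i$ twice to $f^2$ and then combining $N$ such expressions.

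Concretely, I would first apply (\ref{Dunkllaplacian}) to $f^2$. Using the classical identities $\Delta(f^2) = 2|\nabla f|^2 + 2 f \Delta f$ and $\nabla(f^2) = 2 f \nabla f$, and the factorisation
\begin{equation*}
f^2(x) - f^2(\sigma_\alpha x) = (f(x) + f(\sigma_\alpha x))(f(x) - f(\sigma_\alpha x)),
\end{equation*}
this produces an expression for $\Delta_k(f^2)$ involving $|\nabla f|^2$, $f\Delta f$, a sum of terms of the form $k_\alpha f(x) \langle \nabla f,\alpha\rangle / \langle\alpha,x\rangle$, and a sum of terms of the form $k_\alpha (f(x) + f(\sigma_\alpha x))(f(x) - f(\sigma_\alpha x))/\langle\alpha,x\rangle^2$.

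Next I would apply (\ref{Dunkllaplacian}) to $f$ itself and multiply by $2f(x)$. Subtracting this from $\Delta_k(f^2)$, the classical part $2f\Delta f$ cancels, and the ``gradient in $\alpha$'' terms cancel as well, since both contributions equal $4\sum_{R_+} k_\alpha f(x)\langle\nabla f,\alpha\rangle/\langle\alpha,x\rangle$. What remains from the reflection terms is
\begin{equation*}
\sum_{\alpha\in R_+} k_\alpha \frac{f(x)-f(\sigma_\alpha x)}{\langle\alpha,x\rangle^2}\Bigl[ 4 f(x) - 2(f(x) + f(\sigma_\alpha x)) \Bigr]
= 2 \sum_{\alpha\in R_+} k_\alpha \frac{(f(x)-f(\sigma_\alpha x))^2}{\langle\alpha,x\rangle^2},
\end{equation*}
together with $2|\nabla f|^2$ from the classical part of $\Delta_k(f^2)$. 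Dividing by $2$ gives the claimed identity.

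There is no real obstacle here; the only thing to be careful about is that the expansions of $\Delta_k(f^2)$ and $2f \Delta_k f$ each contain three types of terms, and one must track the combinatorial factors $2$ and $4$ correctly so that the $f(x)+f(\sigma_\alpha x)$ contribution from the first combines with the $2f(x)$ from the second to produce the squared reflection difference. The result is manifestly non-negative, which is the property that will be useful in the subsequent gradient bound arguments.
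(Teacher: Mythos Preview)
Your proof is correct and follows essentially the same approach as the paper: both apply the explicit formula (\ref{Dunkllaplacian}) to $f^2$, use the classical identities $\Delta(f^2)=2f\Delta f+2|\nabla f|^2$ and $\nabla(f^2)=2f\nabla f$, and track the reflection terms to extract the squared difference. The only cosmetic difference is that the paper rearranges $\Delta_k(f^2)$ directly into $2f\Delta_k f$ plus remainder, whereas you compute $\Delta_k(f^2)$ and $2f\Delta_k f$ separately and subtract; the algebra and combinatorial factors match exactly.
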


\begin{proof}
Using (\ref{Dunkllaplacian}), we have
\begin{align*}
\Delta_k (f^2) 
& = \Delta (f^2) + 2 \Suma k_\alpha 
	\left( 
		\frac{\langle \nabla(f^2), \alpha \rangle}{\alx} - \frac{f^2(x)-f^2(\sigma_\alpha x)}{\alx^2}	
	\right)
\\
& = 2f \Delta f + 2 |\nabla f|^2 + 4f(x) \Suma k_\alpha 
	\left(
		\frac{\langle\nabla f, \alpha \rangle}{\alx} 
		- \frac{f(x)-f(\sigma_\alpha x)}{\alx^2}
	\right)
\\
&\qquad
 + 2\Suma k_\alpha 
 	\left(
 		\frac{f(x)- f(\sigma_\alpha x)}{\alx}
 	\right)^2
\\
& = 2f\Delta_k f + 2 |\nabla f|^2 + 2\Suma k_\alpha 
 	\left(
 		\frac{f(x)- f(\sigma_\alpha x)}{\alx}
 	\right)^2.
\end{align*}
The expression for $\Gamma(f)$ then follows immediately from this and the definition.
\end{proof}

In order to avoid long expressions, in this paper we use the (non-standard) notation
$$ A_\alpha f(x) := \frac{f(x)-f(\sigma_\alpha x)}{\alx}$$
for the $\alpha$-dependent difference part in the definition of the Dunkl operators. 

\begin{example}
The main example throughout this section is the linear case corresponding to $b(x) = -cx$ for some $c>0$. This case is related to the generalised Ornstein-Uhlenbeck semigroup studied in \cite{RV}, where it was shown that the operator
$$ L_{OU} := \Delta_k - c x \cdot \nabla f$$
generates a Markov semigroup on $C_b(\RR^N)$. 

We note that our operator is related to $L_{OU}$ via
$$ Lf = L_{OU}f - c \Suma k_\alpha (f-f\circ \sigma_\alpha).$$
Since the linear operator $f \mapsto - c \Suma k_\alpha (f-f\circ \sigma_\alpha)$ is clearly bounded on $C_b(\RR^N)$, standard results from general semigroup theory imply that $L$ generates a contraction semigroup $(P_t)_{t \geq 0}$ (the perturbation of a contraction semigroup generator by a bounded operator generates a contraction semigroup, see for example \cite[Theorem 3.1]{Davies1980}). 

It is easy to see that the assumption \eqref{Gammatildecondition} that we impose below on the drifts $b$ will assure the positive maximum principle holds and thus $(P_t)_{t\geq 0}$ is indeed positivity-preserving, so a Markov semigroup (see \cite[Theorem 19.11]{Kallenberg}).

Moreover, adding suitably small perturbations to the linear drifts $b(x)=-cx$ should produce further Markov semigroups with generators of the desired form \eqref{findimgenerator}. 
\end{example}

\begin{remark}
We expect that the class of functions $b$ for which the operator $L$ generates a Markov semigroup is much larger but this is the subject of a different investigation.
\end{remark}

\subsection{Gradient bounds}

The main result of this section is a bound on the symmetrised gradient form
$$ \tilde{\Gamma} (f)(x) := \sum_{g\in G} |\nabla_k f(gx)|^2.$$
This form of the gradient has the advantage that it is $G$-invariant, so we do not have to deal with the reflected terms that will appear in the computations below. 

\begin{thm} \label{findimgradientbounds}
Let $b_i \in C^1(\RR^N)$ such that $\nabla(b_i)$ is bounded for all $i=1,\ldots, N$ and which satisfy the assumptions
\begin{equation} \label{Gammatildecondition}
\frac{2}{\alx^2} - \langle b(x), \alpha \rangle \frac{1}{\alx} \geq 0 \qquad \forall x\in \RR^N, \; \forall \alpha\in R_+
\end{equation} 
and 
\begin{equation} \label{gcondition}
b(gx) = gb(x) \qquad \forall g\in G, \; \forall x\in \RR^N.
\end{equation}

Consider the operator
$$ L = \Delta_k + b(x) \cdot \nabla_k$$
which is assumed to generate a Markov semigroup $(P_t)_{t\geq 0}$ on $C_b(\RR^N)$. 

We then have the bound
\begin{equation} \label{gradientboundsym} 
\tilde{\Gamma}(P_tf) \leq e^{2\eta t} P_t(\tilde{\Gamma}(f)) \qquad \forall t>0,
\end{equation}
where
$$ \eta := \max_{i} \sup_{x\in\RR^N} \partial_i(b_i)(x) 
		+ (N-1) \max_{i\neq j}\norm{\partial_j(b_i)}_\infty 
		+  \sqrt{2} \gamma \max_{\alpha\in R_+} \norm{A_\alpha(b)}_\infty.$$

\end{thm}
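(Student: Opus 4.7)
My plan is to follow a Bakry--\'Emery style interpolation along the semigroup, reducing (\ref{gradientboundsym}) to a pointwise $\tilde{\Gamma}_2$-like inequality that can be closed via Gronwall's lemma. Concretely, I fix $t>0$ and $x\in\RR^N$ and set $F(s):=P_s(\tilde{\Gamma}(P_{t-s}f))(x)$ for $s\in[0,t]$, so that $F(0)=\tilde{\Gamma}(P_tf)(x)$ and $F(t)=P_t(\tilde{\Gamma}f)(x)$. Direct differentiation gives $F'(s) = P_s\bigl[L\tilde{\Gamma}(g) - 2\sum_{h\in G}\nabla_kg(hx)\cdot\nabla_k Lg(hx)\bigr](x)$ with $g := P_{t-s}f$, so by positivity of $P_s$ and Gronwall's inequality, (\ref{gradientboundsym}) will follow once I establish the pointwise bound
\begin{equation*}
L\tilde{\Gamma}(g)(x) - 2\sum_{h\in G}\nabla_k g(hx)\cdot\nabla_k Lg(hx) \geq -2\eta\,\tilde{\Gamma}(g)(x)
\end{equation*}
for every smooth $g$ and every $x\in\RR^N$.

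The $G$-equivariance assumption (\ref{gcondition}) makes $L$ commute with the $G$-action, so each summand in $\tilde{\Gamma}$ can be analysed separately, and it suffices to compute $L|\nabla_kg|^2 - 2\nabla_kg\cdot\nabla_k Lg$ and then sum over $h$. Applying Lemma \ref{Dunklcarreduchamp} to each $(T_ig)^2$ handles $\Delta_k|\nabla_kg|^2$, the commutation $T_i\Delta_k = \Delta_k T_i$ cancels $2\nabla_kg\cdot\nabla_k\Delta_kg$, and the generalised Leibniz rule of Lemma \ref{generalisedLeibniz}, applied to both $b\cdot\nabla_k(T_ig)^2$ and $T_i(b\cdot\nabla_kg)$, handles the drift contribution. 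Writing $T_ib_j = \partial_ib_j + \Suma k_\alpha \alpha_i A_\alpha b_j$ and using the telescoping identity $\alx\,A_\alpha\nabla_kg(x) = \nabla_kg(x) - \nabla_kg(\sigma_\alpha x)$, I expect the expression to collapse to
\begin{align*}
L|\nabla_kg|^2 - 2\nabla_kg\cdot\nabla_k Lg
&= 2\sum_i|\nabla T_ig|^2 + \Suma k_\alpha(2 - \la b,\alpha\ra\alx)\sum_i(A_\alpha T_ig)^2 \\
&\quad - 2\sum_{i,j}\partial_ib_j\,T_ig\,T_jg - 2\Suma k_\alpha\la\nabla_kg(x),\alpha\ra\,A_\alpha b(x)\cdot\nabla_kg(\sigma_\alpha x).
\end{align*}
The two terms on the first line are non-negative: the Hessian-square term trivially so, and the reflection term because hypothesis (\ref{Gammatildecondition}) is, after multiplication by $\alx^2>0$, precisely equivalent to $2 - \la b,\alpha\ra\alx\geq 0$.

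It then remains to bound the two remaining terms on the second line after summing over $h\in G$. The Jacobian term is controlled by separating diagonal and off-diagonal contributions and using $2|T_ig\,T_jg| \leq (T_ig)^2 + (T_jg)^2$, producing a bound by $\bigl[\max_i\sup\partial_ib_i + (N-1)\max_{i\neq j}\|\partial_jb_i\|_\infty\bigr]|\nabla_kg|^2$, which accounts for the first two contributions to $\eta$. For the cross term, $|\alpha|^2=2$ and $|A_\alpha b|\leq\|A_\alpha b\|_\infty$ yield
\begin{equation*}
\bigl|\la\nabla_kg(x),\alpha\ra\,A_\alpha b(x)\cdot\nabla_kg(\sigma_\alpha x)\bigr| \leq \tfrac{\sqrt{2}}{2}\|A_\alpha b\|_\infty\bigl(|\nabla_kg(x)|^2 + |\nabla_kg(\sigma_\alpha x)|^2\bigr),
\end{equation*}
and after summing over $h$ with $x$ replaced by $hx$, the reflected contributions $|\nabla_kg(\sigma_\alpha hx)|^2$ reassemble into a second copy of $\tilde{\Gamma}(g)(x)$ because $h\mapsto\sigma_\alpha h$ permutes $G$; this supplies the final $\sqrt{2}\gamma\max_\alpha\|A_\alpha b\|_\infty$ term in $\eta$. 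The main obstacle is precisely this cross term: the difference operator $A_\alpha b$ couples the values of $\nabla_k g$ at $x$ and at $\sigma_\alpha x$, and only by averaging over the full reflection group $G$ can one absorb the reflected derivative into an expression of the same form on the right-hand side. This is the structural reason why the symmetrised gradient $\tilde{\Gamma}$, rather than the raw $|\nabla_kg|^2$, is the correct quantity to control in this setting.
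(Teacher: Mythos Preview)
Your proposal is correct and follows essentially the same route as the paper: a semigroup interpolation reduced via the $G$-equivariance of $L$ (the paper's $J_g=0$) to the pointwise identity for $L|\nabla_k g|^2 - 2\nabla_k g\cdot\nabla_k Lg$, whose non-negative pieces are exactly the paper's $\Gamma_L\geq 0$ and whose remaining Jacobian and cross terms are bounded by the same diagonal/off-diagonal split and the same $2xy\leq x^2+y^2$ estimate, with the reflected gradients absorbed by the $G$-sum. The only differences are cosmetic: you parametrise the interpolation in the reverse direction and phrase the key step as a $\tilde{\Gamma}_2$ inequality rather than as the paper's decomposition into $\Gamma_L$, $I(h)$ and $J_g$.
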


\begin{proof}

Firstly, let $g\in G$ and we compute
\begin{align*}
\frac{\diff}{\diff s} P_{t-s}(|\nabla_k f_s|^2 \circ g)
&=\sum_{j=1}^N \frac{\diff}{\diff s} P_{t-s} ((T_jf_s)^2 \circ g)
\\
&=\sum_{j=1}^N P_{t-s} ( -L ((T_jf_s)\circ g)^2) + 2 (T_jf_s)\circ g  \cdot (T_jL f_s) \circ g)
\\
&=P_{t-s} \left( -2 \sum_{j=1}^N \Gamma_L((T_jf_s) \circ g) + I(f_s) \circ g + 2\sum_{j=1}^N (T_jf_s)\circ g \cdot J_g(T_jf_s)\right),
\end{align*}
where
$$ I(h):=2 \nabla_k h \cdot [\nabla_k,L] h
=2 \sum_{i=1}^N \nabla_k h \cdot [\nabla_k,b_iT_i]h $$
and 
$$ J_g(h):= (Lh)\circ g - L(h\circ g).$$
Here and everywhere in the below the notation $[A,B]$ stands for the commutator of the operators $A$ and $B$, i.e., $[A,B]:=AB-BA$.

\vspace{5pt}
\noindent \textbf{Step 1: compute $\Gamma_L(h)$.} We have 
\begin{align*}
\Gamma_L(h) 
&= \frac{1}{2} \left( \Delta_k (h^2) - 2 h \Delta_k h + \sum_{i=1}^N b_i \left[T_i(h^2) - 2h T_ih\right]\right)
\\
&= \Gamma(h) + \frac{1}{2} \sum_{i=1}^N b_i \left[T_i(h^2) - 2h T_ih\right].
\end{align*}
Using Lemmas \ref{Dunklcarreduchamp} and \ref{generalisedLeibniz}, we have
\begin{align*}
&\Gamma_L(h)(x)
\\
&\qquad
= |\nabla h(x)|^2 + \Suma k_\alpha \left( \frac{h(x)-h(\sigma_\alpha x)}{\alx}\right)^2
-\frac{1}{2} \sum_{i=1}^N \Suma k_\alpha \alpha_i b_i(x) \frac{(h(x)-h(\sigma_\alpha x))^2}{\alx}
\\
&\qquad
= |\nabla h(x)|^2 + \frac{1}{2} \Suma k_\alpha (h(x)-h(\sigma_\alpha x))^2 \left(\frac{2}{\alx^2} - \langle \alpha, b(x) \rangle \frac{1}{\alx} \right).
\end{align*}
Thus, assumption (\ref{Gammatildecondition}) assures that $\Gamma_L(h) \geq 0$.

\vspace{5pt}
\noindent \textbf{Step 2: compute $J_g(h)$.} It will be useful here to see $g \in G$ as a matrix, and recall that $gg^T=g^Tg=I$. A simple computation shows that
\begin{equation} \label{gformula1} 
\nabla(h\circ g) = g^T (\nabla h) \circ g.
\end{equation}
Also, note that 
$$ g \sigma_\alpha g^T (x) = x- \langle \alpha, g^T x\rangle g \alpha = \sigma_{g\alpha} (x),$$
so
$$ A_\alpha(h\circ g)(x) 
= \frac{h(gx)-h(g\sigma_\alpha x)}{\alx} 
= \frac{h(gx)-h(\sigma_{g\alpha}(gx))}{\langle g\alpha, gx\rangle} 
= A_{g\alpha} (h) (gx).$$
Thus, since by $G$-invariance we have $k_\alpha=k_{g\alpha}$, we obtain
\begin{equation} \label{gformula2} 
\Suma k_\alpha \alpha A_\alpha(h\circ g) 
= g^T \Suma k_{g\alpha} g\alpha A_{g\alpha} (h) \circ g
= g^T \Suma k_\alpha \alpha A_{\alpha}(h) \circ g,
\end{equation}
where in the last step we simply used a change of variables. Finally, from \eqref{gformula1} and \eqref{gformula2}, we obtain
\begin{equation} \label{gformula}
\nabla_k (h\circ g) = g^T \nabla_k(h)\circ g.
\end{equation}

Using \eqref{gformula}, we can now compute
$$ \Delta_k(h \circ g) = \Delta_k(h)\circ g,$$
and also
$$ J_g(h)(x) = b(gx) \cdot (\nabla_k h)\circ g - b(x) \cdot g^T(\nabla_k h) \circ g = (b(gx) - gb(x)) \cdot (\nabla_k h) \circ g =0,$$
by our assumption \eqref{gcondition}.

\vspace{5pt}
\noindent \textbf{Step 3: estimate $I(h)$.} We first note that, using the commutativity of Dunkl operators, 
$$ [T_j,b_iT_i]
 = [T_j,b_i]T_i + b_i [T_j,T_i]
 = [T_j,b_i]T_i,$$
and
\begin{align*}
[T_j,b_i]T_ih(x) 
& = T_j(b_iT_ih)(x) - b_i(x) T_jT_ih(x)
\\
& = T_j(b_i)(x) T_ih(x) - \Suma k_\alpha \alpha_j \frac{(b_i(x)-b_i(\sigma_\alpha x))(T_ih(x)-T_ih(\sigma_\alpha x))}{\alx}.
\end{align*}
Thus we have
\begin{align*}
I(h)
& = 2 \sum_{i,j=1}^N T_jh [T_j,b_i]T_ih
\\
& = 2 \sum_{i,j=1}^N 
	\left(
		T_j(b_i) T_j h T_ih
		-\Suma k_\alpha \alpha_j A_\alpha(b_i) T_jh(T_i h-(T_i h) \circ \sigma_\alpha)
	\right)
\\
& = 2 \sum_{i,j=1}^N 
	\left(
		\partial_j(b_i) T_j h T_ih
		+ \Suma k_\alpha \alpha_j A_\alpha(b_i) T_jh (T_i h) \circ \sigma_\alpha
	\right)
\\
& = 2 \sum_{i=1}^N \partial_i(b_i) |T_ih|^2 
+ 2 \sum_{i\neq j} \partial_j(b_i) T_ih T_jh 
+ 2 \sum_{i,j=1}^N \Suma k_\alpha \alpha_j A_\alpha(b_i) T_jh (T_i h) \circ \sigma_\alpha.
\end{align*}
Here, in the last step, we simply isolated the terms containing second powers $|T_ih|^2$, which will be necessary in obtaining gradient bounds. To obtain similar second powers from the mixed terms, we use the basic inequality $2xy \leq x^2 +y^2$.  This gives, for the second sum, 
\begin{align*}
2 \sum_{i\neq j} \partial_j(b_i) T_ih T_jh 
&\leq \sum_{i\neq j} \norm{\partial_j(b_i)}_\infty \left[ (T_ih)^2 + (T_jh)^2 \right]
\\
&\leq 2(N-1) \max_{i \neq j} \norm{\partial_j(b_i)}_\infty  |\nabla_kh|^2,
\end{align*}
and for the second sum, recalling that $|\alpha|^2=2$, we obtain
\begin{align*}
2 \sum_{i,j=1}^N \Suma k_\alpha \alpha_j A_\alpha(b_i) T_jh (T_i h) \circ \sigma_\alpha
&=2 \Suma k_\alpha \langle \alpha, \nabla_k h \rangle \langle A_\alpha(b), (\nabla_k h) \circ \sigma_\alpha \rangle 
\\
&\leq \sqrt{2} \max_{\alpha\in R_+} \norm{A_\alpha(b)}_\infty \Suma k_\alpha \left[ |\nabla_k h|^2 + |(\nabla_k h) \circ \sigma_\alpha|^2 \right]
\end{align*}
\vspace{5pt}
\noindent \textbf{Step 4: gradient bounds.} From the previous three steps, summing up the computations over all $g\in G$, we obtain
\begin{multline*}
\frac{\diff}{\diff s} P_{t-s} (\tilde{\Gamma}(f_s)) 
 \leq \sum_{g\in G} P_{t-s} (I(f_s) \circ g)
\\
\leq 
	2\left(\max_{i} \sup_{x\in\RR^N} \partial_i(b_i)(x) 
		+ (N-1) \max_{i\neq j}\norm{\partial_j(b_i)}_\infty 
		+  \sqrt{2} \gamma \max_{\alpha\in R_+} \norm{A_\alpha(b)}_\infty \right)
		P_{t-s} (\tilde{\Gamma}(f_s)).
\end{multline*}

So we have proved that 
$$ \frac{\diff}{\diff s} P_{t-s} (\tilde{\Gamma}(f_s)) \leq 2\eta P_{t-s}(\tilde{\Gamma}(f_s)),$$
where
$$ \eta := \max_{i} \sup_{x\in\RR^N} \partial_i(b_i)(x) 
		+ (N-1) \max_{i\neq j}\norm{\partial_j(b_i)}_\infty 
		+  \sqrt{2} \gamma \max_{\alpha\in R_+} \norm{A_\alpha(b)}_\infty.$$
Integrating this inequality, we obtain
$$ \tilde{\Gamma}(P_tf) \leq e^{2\eta t} P_t(\tilde{\Gamma}(f)),$$
as required.
\end{proof}

\begin{example} \label{exampleBE-x}
If $b(x)=-cx$, for a constant $c>0$, then it is clear that it satisfies the assumption \eqref{Gammatildecondition} of the Theorem. Moreover, $\partial_i(b_j)=-c\delta_{ij}$ and
$$ A_\alpha(b_i)(x) = c\frac{-x_i + (x-\alx \alpha)_i}{\alx} = -c\alpha_i,$$
so
$$ \eta = -c + 0 + \sqrt{2} \gamma \max_{\alpha} |c\alpha| =  -c + 2c\gamma.$$
Thus, if $\gamma <\frac{1}{2}$, the inequality (\ref{gradientboundsym}) proved above is in fact a coercive bound. 
\end{example}

\begin{remark}
It is possible to extend the previous example of linear drifts by adding small perturbations and such that the conditions of the previous Theorem still hold.
\end{remark}

\subsection{Invariant measures} \label{SEC:invmeasfinite}

The strategy for proving existence of invariant measure is to use general results from probability theory such as the Prokhorov and Krylov-Bogoliubov theorems. 
A useful tool in proving the tightness conditions in these theorems is that of a Lyapunov function.

\begin{defn} \label{defnlyapunov}
We say that a smooth function $\rho: \RR^N \to [0,\infty]$ is a \textbf{Lyapunov function} for $L$ if it satisfies the conditions:

\begin{description}

\item[(i)] $\rho^{-1}([0,\infty)) \neq \emptyset$;

\item[(ii)] for any $M>0$, the level set $\{x\in\RR^N : \rho(x) \leq M \}$ is compact;

\item[(iii)] there exist positive constants $C_1, C_2$, with $C_2\neq 0$, such that
\begin{equation} \label{lemmainvmeascondition}
L \rho (x) \leq C_1 - C_2 \rho (x) 
\end{equation}
holds for all $x\in\RR^N$ for which $\rho (x) \neq \infty$.
\end{description}
\end{defn}

We then have the following easy consequence.

\begin{lem} \label{lemmainvmeas}
Let $\rho$ be a Lyapunov function. Then, for any $x\in\RR^N$ for which $\rho(x) \neq \infty$, $P_t\rho(x)$ is bounded in $t$, i.e., there exists a constant $C>0$ such that 
$$ P_t \rho (x) \leq C \qquad \text{ for all } t>0.$$
\end{lem}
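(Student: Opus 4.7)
The plan is to turn the pointwise dissipativity condition \eqref{lemmainvmeascondition} into a scalar differential inequality for the map $t \mapsto P_t\rho(x)$ and then integrate it. Formally, one writes
\[
\frac{\diff}{\diff t} P_t\rho(x) = P_t(L\rho)(x) \leq P_t(C_1 - C_2\rho)(x) = C_1 - C_2 P_t\rho(x),
\]
using that $P_t 1 = 1$ since $(P_t)$ is Markov. Solving this linear ODE (or applying Gronwall's lemma to $u(t) := P_t\rho(x) - C_1/C_2$) yields
\[
P_t\rho(x) \leq e^{-C_2 t}\rho(x) + \frac{C_1}{C_2}\bigl(1 - e^{-C_2 t}\bigr) \leq \max\Bigl(\rho(x), \frac{C_1}{C_2}\Bigr),
\]
so $C := \max(\rho(x), C_1/C_2)$ works.

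The only delicate point is justifying the Kolmogorov identity $\frac{\diff}{\diff t} P_t\rho = P_t(L\rho)$ for the unbounded, possibly $[0,\infty]$-valued function $\rho$, since $\rho$ need not lie in the usual domain of $L$ on $C_b(\RR^N)$. The standard way around this is a truncation argument: define $\rho_M := \rho \wedge M$ (or compose $\rho$ with a smooth cutoff $\chi_M$ that equals the identity on $[0,M]$ and is constant for large values), so that $\rho_M \in C_b(\RR^N)$. On the sublevel set $\{\rho \leq M\}$, where $\rho_M = \rho$, the inequality \eqref{lemmainvmeascondition} gives $L\rho_M \leq C_1 - C_2\rho_M$ up to a boundary correction which is nonpositive thanks to the concavity of the truncation. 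Applying the semigroup identity to $\rho_M \in C_b$ and passing to the limit $M \to \infty$ via monotone convergence (both $\rho_M \uparrow \rho$ and $P_t\rho_M \uparrow P_t\rho$ pointwise, using that $P_t$ is a positivity-preserving Markov semigroup) preserves the inequality and gives the claimed bound at the limit.

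The main obstacle is thus not the ODE argument itself, which is essentially a one-line Gronwall computation, but ensuring that the truncation procedure is compatible with the Dunkl operator $L$: because of the non-local difference terms $A_\alpha$ in $L$, truncating $\rho$ can introduce extra contributions of the form $k_\alpha (\rho_M(x) - \rho_M(\sigma_\alpha x))/\alx^2$ that do not appear in the classical elliptic case. One handles this by choosing the truncation $\chi_M$ to be concave nondecreasing, so that $\chi_M(\rho) - \chi_M(\rho \circ \sigma_\alpha)$ has the same sign as and is dominated by $\rho - \rho\circ \sigma_\alpha$; this ensures $L\rho_M \leq C_1 - C_2 \rho_M$ pointwise on $\{\rho < \infty\}$, after which the argument above closes.
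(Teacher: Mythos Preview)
Your proposal is correct and follows essentially the same route as the paper: truncate $\rho$ to a bounded function, apply the Kolmogorov identity and Gronwall to obtain $P_t\rho_M(x)\le e^{-C_2t}\rho_M(x)+\tfrac{C_1}{C_2}(1-e^{-C_2t})$, and let $M\to\infty$ by monotone convergence. The paper simply asserts $L\rho_a\le C_1-C_2\rho_a$ for $a>C_1/C_2$ without discussing the nonlocal Dunkl terms (which in practice vanish because the Lyapunov function constructed later is $G$-invariant), whereas you justify this step via a concave truncation; your argument is thus a slightly more detailed version of the same proof, with the harmless difference that the paper takes $C=\rho(x)+C_1/C_2$ instead of $\max(\rho(x),C_1/C_2)$.
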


Here, $P_t\rho (x)$ should be understood as the limit
$$ P_t\rho(x)=\lim_{a\to\infty} P_t(\rho_a)(x),$$
where $\rho_a (x) = \min \{ \rho(x), a\}$. 

\begin{proof}
Note that by \eqref{lemmainvmeascondition} we have
$$ L\rho_a (y) \leq C_1 - C_2 \rho_a(y)$$
holds for all $y\in\RR^N$ if $a > \frac{C_1}{C_2}$. Since $\rho_a \in C_b(\RR^N)$, then $P_t(\rho_a)$ is well defined and, moreover, we have
$$ \frac{\diff}{\diff t} P_t(\rho_a) = P_t(L\rho_a) \leq C_1 - C_2 P_t(\rho_a).$$
Integrating this inequality, we obtain
$$ P_t(\rho_a)(x) \leq e^{-C_2t} \rho_a(x) + \frac{C_1}{C_2} (1-e^{-C_2t})
\leq \rho_a(x) + \frac{C_1}{C_2}.$$
Taking $a\to \infty$ we have
$$ P_t(\rho) \leq \rho(x) + \frac{C_1}{C_2}.$$
Thus, the conclusion follows by taking $C=\rho(x) + \frac{C_1}{C_2}$ (recall that $x$ is fixed).
\end{proof}

Let us now construct a Lyapunov function as a cutoff of the euclidean distance. To this end, let $\chi: \RR_+ \to \RR_+$ be a smooth function such that 
$$ \chi(t) =
\begin{cases}
0 & \text{if } t \leq 1 
\\
1 & \text{if } t \geq 2.
\end{cases}
$$
Consider then the function
\begin{equation} \label{defnrho} 
\rho(x) = |x| \chi(|x|).
\end{equation}
Then clearly $\rho$ satisfies the conditions (i) and (ii) of Definition \ref{defnlyapunov}. In order to satisfy assumption (\ref{lemmainvmeascondition}) as well, we need to find suitable bounds on $\mathcal{\rho}$. We first compute, using \eqref{Dunkllaplacian},
\begin{align*}
\Delta_k\rho (x) 
& = \Delta \rho(x) + 2\Suma k_\alpha \frac{\langle \nabla\rho (x), \alpha \rangle}{\alx}
\\
& = \sum_{i=1}^N \partial_i\left( \frac{x_i}{|x|} \chi(|x|) + x_i \chi'(|x|) \right)
 + 2\gamma \left(\frac{1}{|x|}\chi(|x|) + \chi'(|x|) \right)
\\
& = \sum_{i=1}^N \left( \frac{1}{|x|}\chi(|x|) - \frac{x_i^2}{|x|^3} \chi(|x|) + \frac{x_i^2}{|x|^2} \chi'(|x|) + \chi'(|x|) + \frac{x_i^2}{|x|} \chi''(|x|) \right)
\\
& \qquad
+ 2\gamma \left(\frac{1}{|x|}\chi(|x|) + \chi'(|x|) \right)
\\
& = (N + 2\gamma -1) \frac{1}{|x|}\chi(|x|)
 + (N + 2\gamma +1) \chi'(|x|) 
 + |x| \chi''(|x|),
\end{align*}
and we notice that, from the choice of $\chi$, there exists a constant $C_1\geq 0$ such that 
$$ \Delta_k\rho \leq C_1.$$
Thus, we have
\begin{align*}
L\rho(x)
& \leq C_1 + \sum_{i=1}^N b_i(x) \partial_i\rho(x)
\\
& = C_1 + \langle x, b(x) \rangle \left( \frac{1}{|x|} \chi(|x|) + \chi'(|x|) \right)
\\
& \leq C_2 + \frac{\langle x, b(x) \rangle}{|x|^2} \rho(x),
\end{align*}
for a constant $C_2\geq 0$. To summarise, we have proved the following Lemma.

\begin{lem}
Assume that there exists a constant $C>0$ such that 
$$ \frac{\langle x, b(x) \rangle}{|x|^2} \leq -C \qquad \text{ for all } x\in\RR^N.$$
Then, the function $\rho$ defined by \eqref{defnrho} is a Lyapunov function.
\end{lem}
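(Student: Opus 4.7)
The plan is to verify the three conditions of Definition \ref{defnlyapunov} directly, essentially reading off the work already carried out in the computation preceding the lemma and closing it with the new hypothesis. Conditions (i) and (ii) will be immediate from the choice of $\chi$: $\rho$ is everywhere finite and nonnegative, and since $\rho(x)=|x|$ for $|x|\geq 2$ we have $\rho(x)\to\infty$ as $|x|\to\infty$, so every sublevel set $\{\rho\leq M\}$ is compact.

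For condition (iii), my first step is to exploit the $G$-invariance of the radial function $\rho$. This makes every difference quotient $A_\alpha\rho$ vanish identically, so $\nabla_k\rho=\nabla\rho$; and formula \eqref{Dunkllaplacian} collapses to
$$\Delta_k\rho(x)=\Delta\rho(x)+2\Suma k_\alpha\frac{\langle\nabla\rho(x),\alpha\rangle}{\alx}.$$
Since $\nabla\rho(x)=x\bigl(\chi(|x|)/|x|+\chi'(|x|)\bigr)$ is parallel to $x$, the ratio $\langle\nabla\rho,\alpha\rangle/\alx$ is actually independent of $\alpha$, and the explicit expression for $\Delta_k\rho$ computed just before the lemma is a sum of terms each of which is bounded on $\RR^N$ (using that $\chi'$ and $\chi''$ have support in $[1,2]$). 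Call the resulting bound $C_1'$.

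The drift contribution, again using $\nabla_k\rho=\nabla\rho$, splits as
$$b(x)\cdot\nabla_k\rho(x)=\frac{\langle x,b(x)\rangle}{|x|^2}\,\rho(x)+\langle x,b(x)\rangle\chi'(|x|),$$
and the second summand is bounded because its support lies in the compact annulus $\{1\leq|x|\leq 2\}$ where the continuous $b$ is bounded. Combining with the $\Delta_k\rho$ estimate and applying the hypothesis $\langle x,b(x)\rangle/|x|^2\leq -C$ then yields $L\rho(x)\leq C_1-C\rho(x)$ for some $C_1\geq 0$, which is exactly \eqref{lemmainvmeascondition} with $C_2=C$. I do not expect any real obstacle; the only point worth double-checking is that the reflection terms in $\Delta_k$ and $\nabla_k$ really do disappear or simplify as claimed for the radial $\rho$, after which the argument is a routine Lyapunov estimate and the assumed sign of $\langle x,b(x)\rangle/|x|^2$ is precisely what is needed to trade the drift against $\rho$ itself.
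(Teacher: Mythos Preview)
Your proposal is correct and follows essentially the same route as the paper: you verify (i) and (ii) directly from the properties of $\chi$, use the $G$-invariance of the radial function $\rho$ to kill the difference terms so that $\nabla_k\rho=\nabla\rho$ and $\Delta_k\rho$ reduces to the expression computed before the lemma, bound $\Delta_k\rho$ by a constant, and then split the drift term into $\frac{\langle x,b(x)\rangle}{|x|^2}\rho(x)$ plus a compactly supported remainder. The only point you make slightly more explicit than the paper is why the remainder $\langle x,b(x)\rangle\chi'(|x|)$ is bounded (continuity of $b$ on the annulus $\{1\leq|x|\leq 2\}$), which the paper silently absorbs into the constant $C_2$.
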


\begin{example}
In the case of linear drifts $b(x)=-cx$ with $c>0$, we have
$$ \frac{\langle x, b(x) \rangle}{|x|^2} =-c,$$
so the function $\rho$ does satisfy the assumption of the Lemma, so it is a Lyapunov function. 
\end{example}

We are now ready to prove the main result of this section.

\begin{thm} \label{invmeasthm}
Let $\gamma < \frac{1}{2}$. Assume that there exists $\eta <0$ such that the following inequality holds
$$\tilde{\Gamma}(P_tf) \leq e^{2\eta t} P_t(\tilde{\Gamma}(f)).$$
Moreover, assume that there exists a Lyapunov function $\rho$.

Then, there exists a sequence $(P_{t_l})_{l \geq 0}$ and a probability measure $\nu$ such that for all $f$ bounded and continuous, and for all $x\in\RR^N$ such that $\rho(x) \neq \infty$, we have
$$ P_{t_l} f(x) \to \int f \diff \nu \qquad \text{ as } l \to \infty.$$
Additionally, the measure $\nu$ is invariant for the Markov semigroup $P_t$.  
\end{thm}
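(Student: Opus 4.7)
The plan is to combine Krylov--Bogoliubov-type tightness produced by the Lyapunov function $\rho$ with the coercivity coming from the hypothesis $\eta<0$. Fix $x_0$ with $\rho(x_0) < \infty$ and consider the probability measures $\mu_t := P_t^*\delta_{x_0}$ on $\RR^N$. Lemma \ref{lemmainvmeas} gives $P_t \rho(x_0) \leq C$ uniformly in $t \geq 0$, so Markov's inequality combined with compactness of the sub-level sets of $\rho$ (Definition \ref{defnlyapunov}) renders $\{\mu_t\}_{t \geq 0}$ tight. Prokhorov's theorem then yields a subsequence $t_l \uparrow \infty$ and a probability measure $\nu$ with $\mu_{t_l} \to \nu$ weakly, hence $P_{t_l}f(x_0) \to \int f \diff \nu$ for every $f \in C_b(\RR^N)$.

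Next I would show that the same sequence and the same limit work for every $x$ with $\rho(x) < \infty$. For $f$ sufficiently regular that $\tilde\Gamma(f)$ is bounded (for instance $f$ Schwartz), the hypothesis yields
\[
\tilde\Gamma(P_tf)(y) \leq e^{2\eta t}\,P_t\!\big(\tilde\Gamma(f)\big)(y) \leq e^{2\eta t}\norm{\tilde\Gamma(f)}_\infty \xrightarrow[t\to\infty]{} 0
\]
uniformly in $y$, since $\eta<0$. Using Lemma \ref{Dunklcarreduchamp} together with the definition of $\tilde\Gamma$, this forces both $|\nabla P_tf|$ and the reflection differences $A_\alpha(P_tf)$ to vanish in the limit, so $|P_tf(x) - P_tf(x_0)| \to 0$ for every $x$. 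Evaluating along the subsequence from the previous paragraph gives $P_{t_l}f(x) \to \int f \diff \nu$, and a standard three-epsilon approximation---using the tightness of $\{P_t^*\delta_x\}_{t \geq 0}$ and the density of this regular class in $C_b$ for uniform convergence on compacts---extends the convergence to every $f \in C_b$.

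For invariance, fix $f \in C_b$ and $r \geq 0$. Since $L$ generates a Markov semigroup on $C_b$ we have $P_rf \in C_b$, so applying the previous step to $P_rf$ at the point $x_0$ gives
\[
\int P_rf \diff \nu = \lim_{l \to \infty} P_{t_l}(P_rf)(x_0) = \lim_{l \to \infty} P_{t_l+r}f(x_0) = \int f \diff \nu,
\]
where the last equality follows because $\{\mu_{t_l+r}\}_l$ is still tight and every weak limit must coincide with $\nu$ by the previous paragraph applied along a sub-subsequence. Hence $P_r^*\nu = \nu$, i.e.\ $\nu$ is invariant.

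The main obstacle is the second step: extracting genuine spatial regularity of $P_tf$ from the decay of $\tilde\Gamma(P_tf)$, which is built from the Dunkl gradient $\nabla_k$ rather than $\nabla$. The reflection terms $A_\alpha(P_tf)$ mix with the classical gradient inside $\nabla_k$, so one either controls them separately, or first carries out the argument on $G$-invariant test functions (for which $\nabla_k$ collapses to $\nabla$ and all $A_\alpha$ vanish) and then reduces the general case using the $G$-equivariance of $P_t$, which is available via assumption \eqref{gcondition}.
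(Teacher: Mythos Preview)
Your Step~1 is exactly the paper's Step~1: tightness from the Lyapunov estimate, then Prokhorov.

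The genuine gap is Step~2, and it is precisely the place where the hypothesis $\gamma<\tfrac12$ must be used---you never use it. Lemma~\ref{Dunklcarreduchamp} computes the carr\'e-du-champ $\Gamma(f)=\tfrac12(\Delta_k(f^2)-2f\Delta_kf)=|\nabla f|^2+\sum_\alpha k_\alpha(A_\alpha f)^2$, but the symmetrised form $\tilde\Gamma(f)=\sum_{g\in G}|\nabla_k f(g\cdot)|^2$ is built from $|\nabla_k f|^2$, which is \emph{not} the same object: $|\nabla_k f|^2$ contains cross terms $2\sum_\alpha k_\alpha\langle\alpha,\nabla f\rangle A_\alpha f$ with no sign, so decay of $\tilde\Gamma(P_tf)$ does not force $|\nabla P_tf|$ and the $A_\alpha(P_tf)$ to vanish separately. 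The paper handles this by writing $\nabla=\nabla_k-\sum_\alpha k_\alpha\alpha A_\alpha$, bounding $|\nabla_k P_tf|$ by $e^{\eta t}\|\tilde\Gamma(f)\|_\infty^{1/2}$, and then observing that each $A_\alpha(P_tf)$ is itself a difference of values of $P_tf$ at two points, so the same estimate can be applied to it. Iterating produces a geometric series $\sum_{i\ge0}(2\gamma)^i$, which converges exactly when $\gamma<\tfrac12$, yielding the quantitative bound $|P_tf(y)-P_tf(0)|\le C|y|e^{\eta t}\|\tilde\Gamma(f)\|_\infty^{1/2}$.

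Your proposed workaround~(b) via $G$-invariant test functions does not close the gap: $G$-invariant continuous functions are constant on orbits and hence do not separate points of $\RR^N$, so they are not dense in $C_b(\RR^N)$, and there is no obvious $G$-equivariance argument that upgrades convergence on this class to convergence on all of $C_b$.

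For invariance, your argument that every weak limit of $\{p_{t_l+r}^{x_0}\}$ must be $\nu$ is circular as written: what you have from the previous paragraph is convergence along the specific sequence $(t_l)$, not uniqueness of all subsequential limits. The paper instead invokes Krylov--Bogoliubov to produce some invariant measure $\mu$, and then uses the quantitative bound from Step~2 to show $|p_{t_l}^x(f)-\mu(f)|\le\int|P_{t_l}f(x)-P_{t_l}f(y)|\diff\mu(y)\to0$, forcing $\mu=\nu$. Once you have the quantitative Step~2 estimate above, either route works.
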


\begin{proof}

Fix $x\in \rho^{-1}([0,\infty))$ and define the probability measures
$$ p_t^x (A) := P_t(\1_A)(x).$$

\noindent\textbf{Step 1: $(p_t^x)_{t>0}$ is a tight family.}
By Markov's inequality we have, for any $M >0$, 
\begin{align*}
p_t^x (\{ \rho \geq M \}) 
\leq \frac{1}{M} \int \rho \diff p_t^x
=\frac{1}{M} P_t \rho(x)
\leq \frac{C}{M},
\end{align*}
where in the last step we used Lemma \ref{lemmainvmeas}. This implies that
\begin{equation} \label{invmeasineq}
 p_t^x (\{ \rho \leq M \}) \geq 1-\frac{C}{M}.
\end{equation}
Since the set $\{ \rho \leq M \}$ is compact, and $M>0$ is arbitrary, this shows that the family $(p_t^x)_{t>0}$ is tight. By Prokhorov's theorem, there exists a sequence $(t_l)_{l \geq 0}$ such that $(p_{t_l}^x)_{l \geq 0}$ converges weakly to a probability measure, say $\nu$. 

\vspace{5pt}
\noindent\textbf{Step 2: $p_t^y \Rightarrow \nu$ for all $y\in \rho^{-1}([0,\infty))$.}

Let $\gamma_{x,y}(s)=x+s(y-x)$. We first compute
\begin{align*}
&|P_{t_l}f(y)-P_{t_l}f(x)| 
= \left|\int_0^1 \nabla(P_{t_l}f)(\gamma_{x,y}(s)) \cdot (y-x) \diff s \right|
\\
&\qquad
\leq |x-y| \int_0^1 |\nabla(P_{t_l}f)(\gamma_{x,y}(s))| \diff s
\\
&\qquad
=|x-y| \int_0^1 \left| \nabla_k(P_{t_l}f)(\gamma_{x,y}(s)) - \Suma k_\alpha \alpha \frac{P_{t_l}f(\gamma_{x,y}(s))-P_{t_l}f(\sigma_\alpha(\gamma_{x,y}(s)))}{\langle \alpha, \gamma_{x,y}(s) \rangle} \right| \diff s
\\
&\qquad
\leq |x-y| \int_0^1 |\nabla_k(P_{t_l}f)(\gamma_{x,y}(s))| \diff s 
\\
&\qquad\qquad
	+ \sqrt{2} |x-y| \Suma k_\alpha \int_0^1 \left|\frac{P_{t_l}f(\gamma_{x,y}(s))-P_{t_l}f(\sigma_\alpha(\gamma_{x,y}(s)))}{\langle \alpha, \gamma_{x,y}(s) \rangle}\right| \diff s.
\end{align*}
From our assumption, we have
\begin{equation}
 |\nabla_k(P_t f)| \leq \sqrt{\tilde{\Gamma}(P_tf)} \leq e^{\eta t} \norm{\tilde{\Gamma}(f)}_\infty^{1/2},
\end{equation}
so the above becomes
\begin{equation} \label{invmeasdifference}
\begin{aligned}
|P_{t_l}f(y)-P_{t_l}f(x)| 
&\leq |x-y| e^{\eta t_l} \norm{\tilde{\Gamma}(f)}_\infty^{1/2} 
\\
&\qquad
	+ \sqrt{2} |x-y| \Suma k_\alpha \int_0^1 \left|\frac{P_{t_l}f(\gamma_{x,y}(s))-P_{t_l}f(\sigma_\alpha(\gamma_{x,y}(s)))}{\langle \alpha, \gamma_{x,y}(s) \rangle}\right| \diff s.
\end{aligned}
\end{equation}
Without loss of generality we can assume that $x=0$ and we will use an iteration argument applying (\ref{invmeasdifference}) repeatedly. We first have 
\begin{equation} \label{invmeasdifference0}
\begin{aligned}
|P_{t_l}f(y)-P_{t_l}f(0)|
&\leq |y| e^{\eta t_l} \norm{\tilde{\Gamma}(f)}_\infty^{1/2}
\\
&\qquad 
	+ \sqrt{2} |y| \Suma \frac{k_\alpha}{|\langle \alpha, y\rangle|} \int_0^1 \frac{1}{s} |P_{t_l}f(sy)-P_{t_l}f(\sigma_\alpha(sy))| \diff s.
\end{aligned}
\end{equation}
Applying (\ref{invmeasdifference}) again, noting that $|sy-\sigma_\alpha(sy)| = s \sqrt{2}|\langle \alpha, y\rangle|$, we have
\begin{align*}
&|P_{t_l}f(y)-P_{t_l}f(0)|
\leq |y| e^{\eta t_l} \norm{\tilde{\Gamma}(f)}_\infty^{1/2}  \left( 1+ 2\Suma k_\alpha \right) 
\\
&\qquad 
	+ 2^{3/2} |y| \Suma \sum_{\beta\in R_+} k_\alpha k_\beta \int_0^1 \int_0^1 \left|\frac{P_{t_l}f(\gamma_{\sigma_\alpha(sy),sy}(u))-P_{t_l}f(\sigma_\beta(\gamma_{\sigma_\alpha(sy),sy}(u)))}{\langle \beta, \gamma_{\sigma_\alpha(sy),sy}(u) \rangle} \right| \diff u \diff s.
\end{align*}
The next step is already too difficult to write down, but we can see that we obtain
\begin{align*}
|P_{t_l}f(y)-P_{t_l}f(0)|
&\leq |y| e^{\eta t_l} \norm{\tilde{\Gamma}(f)}_\infty^{1/2}  \sum_{i=0}^\infty \left[ 2^i \sum_{\alpha_1\in R_+} \sum_{\alpha_2\in R_+} \ldots \sum_{\alpha_i\in R_+} k_{\alpha_1} k_{\alpha_2} \cdots k_{\alpha_i}\right]
\\
&=|y| e^{\eta t_l} \norm{\tilde{\Gamma}(f)}_\infty^{1/2}  \sum_{i=0}^\infty (2\gamma)^i.
\end{align*}
Therefore, if $\gamma < \frac{1}{2}$, the infinite sum converges and we have 
\begin{equation} \label{invmeasstep2}
|P_{t_l}f(y) - P_{t_l}f(0)| 
\leq C |y| e^{\eta t_l}  \norm{\tilde{\Gamma}(f)}_\infty^{1/2} .
\end{equation}
Since $\eta<0$, this shows that $p_{t_l}^y \Rightarrow \nu$, as required.

\vspace{5pt}
\noindent\textbf{Step 3: $\nu$ is an invariant measure for $P_t$.}
We will now apply the Krylov-Bogoliubov theorem. We have, using (\ref{invmeasineq}),
\begin{align*}
\mu_T^x(\{ \rho \leq M\}) 
= \frac{1}{T} \int_0^T p_t^x(\{ \rho \leq M\}) \diff t
\geq 1-\frac{C}{M},
\end{align*}
so the family $(\mu_T^x)_{T>0}$ is tight. By the Krylov-Bogoliubov theorem, there exists an invariant measure for $P_t$, say $\mu$. 

Fix a bounded and continuous function $f$ and let $\epsilon >0$. Then, since $p_{t_l}^x \Rightarrow \nu$, there exists $l_1>0$ such that 
$$ |p_{t_l}^x(f) - \nu(f)| < \frac{\epsilon}{2} \qquad \text{ for all } l\geq l_1.$$
On the other hand, since $\mu$ is an invariant measure, we have
\begin{align*}
|p_{t_l}^x(f)-\mu(f)|
\leq \int \left| P_{t_l}f(x) - P_{t_l}f(y) \right| \diff\mu(y) 
\leq C \norm{\tilde{\Gamma}(f)}_\infty^{1/2} e^{\eta t_l} \int |x-y| \diff\mu(y),
\end{align*}
where, in the second step, we used (\ref{invmeasstep2}). Since $\eta <0$, there exists $k_2>0$ such that
$$ |p_{t_l}^x(f) - \mu(f)| < \frac{\epsilon}{2} \qquad \text{ for all } l>l_2.$$
Thus, for all $l>\max \{ l_1, l_2\}$, we have
$$ |\mu(f) - \nu(f)| \leq |p_{t_l}^x(f) - \mu(f)| + |p_{t_l}^x(f) - \nu(f)|<\epsilon,$$
which shows that $\mu=\nu$. This finishes the proof.
\end{proof}

\begin{remark}
We note that we needed to impose conditions for the coefficients $k_\alpha$ to be small both in order to obtain $\eta <0$, as well as in the proof of existence of invariant measure. It would be interesting to find out whether this is just a shortfall of our chosen method, or whether this is a deeper characteristic of Dunkl operators.
\end{remark}

\begin{example} \label{examplelineardriftsmeas}
If $b(x)=-cx$, for some $c>0$, then an invariant measure for the resulting semigroup is given by $\diff\nu(x)=e^{-c|x|^2/2} \diff\mu_k(x)$. Indeed, we have
\begin{align*}
\IntN Lf \diff\nu 
&= \IntN \left(\Delta_k f - c x \cdot \nabla_k f \right) e^{-c|x|^2/2} \diff\mu_k
\\ 
&= - \IntN (\nabla \left(-\frac{c|x|^2}{2} \right) -cx) \cdot \nabla_k f \diff\nu
=0.
\end{align*}
\end{example}

\section{The infinite dimensional case}
 
In this section we work over the infinite dimensional space $\Omega=(\RR^N)^{\ZZ^d}$. An element of this set will be denoted by $\omega=(\omega_l)_{l\in\ZZ^d}$, where each $\omega_l\in\RR^N$. On the lattice $\ZZ^d$ we define the distance $|l-j|=d(l,j):=\Sumn_{i=1}^d |l_i-j_i|$. A cylinder function on $\Omega$ is a smooth function $f:\Omega\to \RR$ that only depends the components $\omega_l$ for $l\in\Lambda$ and $\Lambda\subset \ZZ^d$ is a finite set. The smallest such subset $\Lambda \subset \ZZ^d$ will be denoted $\Lambda(f)$. It is known that the set of cylinder functions on $\Omega$ is dense in $\mathcal{C}(\Omega)$, see for example \cite{GZ}. 

\subsection{Construction of the semigroup}

We want to define an infinite dimensional semigroup on $(\RR^N)^{\ZZ^d}$ with generator 
$$ \mathcal{L}=\Sumn_{l\in\ZZ^d} L^{(l)} + \sum_{l\in\ZZ^d} e^{(l)} \cdot \nabla^{(l)}_k,$$
where 
$$L^{(l)}=\Delta_k^{(l)}+b^{(l)}\cdot \nabla_k^{(l)}$$
is a copy of the operator studied in the previous section, acting only on the $l$ component of $\omega \in (\RR^N)^{\ZZ^d}$. Here $b^{(l)}:\Omega \to \RR^N$ is a function that only depends on the $\omega_l$ component. In addition, $e^{(l)}:\Omega \to \RR^N$ is a $C^1$ function which depends only on $\omega_j$ for $|j-l|<R$ for some fixed $R>0$ (we say that $e^{(l)}$ has finite range of interaction), and such that $e^{(l)}$ and all its first degree derivatives are uniformly bounded. As in the previous section, we assume firstly that for any $l\in\ZZ^d$ we have
\begin{equation}  \label{Gammaconditioninf} 
\frac{2}{\langle \alpha, \omega_l \rangle ^2} - \langle \alpha, b^{(l)}(\omega) \rangle \frac{1}{\langle \alpha, \omega_l \rangle} \geq 0 \qquad \forall \omega \in \Omega, \; \forall \alpha\in R_+
\end{equation}
and
\begin{equation} \label{Gammaconditioninf2} 
\frac{2}{\langle \alpha, \omega_l \rangle ^2} - \langle \alpha, b^{(l)}(\omega) + e^{(l)}(\omega) \rangle \frac{1}{\langle \alpha, \omega_l \rangle} \geq 0 \qquad \forall \omega \in \Omega, \; \forall \alpha\in R_+.
\end{equation}
Secondly, we also assume that for all $l\in\ZZ^d$ we have
\begin{equation} \label{gconditioninf1}
b^{(l)} \circ g^{(l)} = g^{(l)} b^{(l)} \quad \text{ and } \quad e^{(l)} \circ g^{(l)} = g^{(l)} e^{(l)},
\end{equation}
and 
\begin{equation} \label{gconditioninf2}
\quad e^{(\overline{l})} \circ g^{(l)} = e^{(\overline{l})} \quad \forall\; \overline{l} \neq l.
\end{equation}
These two sets of assumptions mirror the conditions \eqref{Gammatildecondition} and \eqref{gcondition}, respectively, of the finite dimensional case. 

The operators $L^{(l)}$ are commutative and, as in the previous section, each generates a finite dimensional semigroup. Thus, their sum $\sum_{l\in\ZZ^d} L^{(l)}$ also generates a semigroup. In this case, the dynamics is given by infinitely many copies of the same model acting independently of each other. Introducing the terms corresponding to the functions $e^{(l)}$ makes the dynamics more interesting as it allows for interactions between the individual diffusions. 

In order to define this semigroup formally, we first consider the truncated operator for some finite $\Lambda \subset \ZZ^d$,
$$ \mathcal{L}_\Lambda := \Sumn_{l\in\ZZ^d} L^{(l)} + \Sumn_{l\in\Lambda} e^{(l)}\cdot \nabla_k^{(l)}.$$
This operator is well defined on the space of cylinder functions on $\Omega$ and we assume that it generates a Markov semigroup $(P_t^{\Lambda})_{t \geq 0}$. Since $\Lambda$ is finite and by the finite range of interactions assumption, then in fact only the diffusions corresponding to the points of $\Lambda^R:= \{l\in\ZZ^d : d(l,\Lambda)<R \}$ interact with each other. Thus, we can split $\mathcal{L}_\Lambda$ into two components
$$ \mathcal{L}_\Lambda = \sum_{l\in \ZZ^d\setminus \Lambda^R} L^{(l)} + \left(\sum_{l\in\Lambda^R} L^{(l)}+ \Sumn_{l\in\Lambda} e^{(l)}\cdot \nabla_k^{(l)}\right).$$
As discussed above, the operators $L^{(l)}$ are commutative, so the first part generates a semigroup of diffusions that do not interact with each other, while the second part is essentially finite dimensional and it can be dealt with as in the previous section. 

Define the carr\'e-du-champ operator
$$ \Gamma_\Lambda(f) := \frac{1}{2} \left(\mathcal{L}_\Lambda(f^2) - 2 f \mathcal{L}_\Lambda f \right).$$
Using computations similar to the finite dimensional case, we have
\begin{align*}
\Gamma_\Lambda(f)(\omega)
&= \sum_{l\in\ZZ^d} |\nabla^{(l)}f(\omega)|^2
+ \frac{1}{2} \sum_{l\in \ZZ^d \setminus \Lambda} k_\alpha A_\alpha^{(l)}(f)^2(\omega) \left[2 - \langle \alpha, b^{(l)}(\omega) \rangle \langle \alpha,\omega \rangle \right]
\\
&\qquad + \frac{1}{2} \sum_{l\in \Lambda} k_\alpha A_\alpha^{(l)}(f)^2(\omega) \left[2 - \langle \alpha, b^{(l)}(\omega)+e^{(l)}(\omega) \rangle \langle \alpha,\omega \rangle \right].
\end{align*}
Here, in keeping with the notation above, $\nabla^{(l)}$ is the usual gradient acting only on the $\omega_l$ component of $\omega \in \Omega$, and $A_\alpha^{(l)}$ is the operator $A_\alpha$ acting on the same component $\omega_l$. From the assumptions \eqref{Gammaconditioninf} and \eqref{Gammaconditioninf2} we have
$$ \Gamma_\Lambda (f) \geq 0.$$

We want to define the infinite semigroup for any cylindrical function $f$ as the limit in the uniform norm
$$ P_tf=\displaystyle\lim_{\Lambda\to\ZZ^d} P_t^{\Lambda}f.$$
In order to prove that this limit exists, it is enough to show that for each increasing sequence $(\Lambda_n)_{n\in\NN}$ of finite subsets of $\ZZ^d$, the sequence $(P_t^{\Lambda_n}f)_{n\in\NN}$ is Cauchy. In other words, fixing a cylindrical function $f$, for finite $\Lambda\subset\Lambda'\subset \ZZ^d$, we want to estimate the norm
$$ \norm{P_t^{\Lambda'}f-P_t^{\Lambda}f}_\infty.$$
We assume also that $\Lambda$ is large enough so that $\Lambda(f)\subset \Lambda$. We construct a sequence $\Lambda_0,\ldots,\Lambda_n$, where $n=|\Lambda'|-|\Lambda|$, such that $\Lambda_0=\Lambda$, $\Lambda_n=\Lambda'$ and for each $j=0,\ldots n-1$, $\Lambda_{j+1}\setminus \Lambda_j = \left\{ h_j\right\}$. Then, we have
\begin{align} \label{cauchyseq1}
\norm{P_t^{\Lambda'}f-P_t^{\Lambda}f}_\infty\leq \Sumn_{j=0}
^{n-1} \norm{P_t^{\Lambda_{j+1}}f-P_t^{\Lambda_j}f}_\infty. 
\end{align}
But we can compute 
$$ P_t^{\Lambda_{j+1}}f-P_t^{\Lambda_j}f 
= \Intr_0^t \frac{\diff}{\diff s} \left[ P_{t-s}^{\Lambda_j}(P_s^{\Lambda_{j+1}}f)\right] \diff s 
= \Intr_0^t P_{t-s}^{\Lambda_j}(\mathcal{L}_{\Lambda_{j+1}}-\mathcal{L}_{\Lambda_j}) P_s^{\Lambda_{j+1}}f \diff s,$$ 
and since $P_t^{\Lambda_j}$ is a Markov semigroup we can estimate
\begin{align*}
|P_t^{\Lambda_{j+1}}f-P_t^{\Lambda_j}f|
\leq \norm{e^{(h_j)}}_\infty \int_0^t P_{t-s}^{\Lambda_j} \left( \left| \nabla_k^{(h_j)}(P_s^{\Lambda_{j+1}}f)\right| \right) \diff s
\end{align*}
Similarly to the previous section, we now introduce the symmetrised gradient form 
\begin{equation} \label{symgraddef}
\tilde{\Gamma}^{(l)}(h)(\omega) 
= \sum_{g\in G} |\nabla_k^{(l)} h(g^{(l)}\omega)|^2,
\end{equation}
defined for all $l\in\ZZ^d$, and where $g^{(l)}$ is the element $g$ of the reflection group $G$ acting on the $\omega_l \in \RR^N$ coordinate of $\omega \in \Omega$. Using this definition, we have
$$ \left| \nabla_k^{(h_j)}(P_s^{\Lambda_{j+1}}f)\right|^2 \leq \tilde{\Gamma}^{(h_j)}(P_s^{\Lambda_{j+1}}f),$$ 
and hence, from the above we deduce
\begin{equation} \label{cauchyseq2}
\norm{P_t^{\Lambda_{j+1}}f-P_t^{\Lambda_j}f}_\infty
\leq \norm{e^{(h_j)}}_\infty \int_0^t \norm{\tilde{\Gamma}^{(h_j)}(P_s^{\Lambda_{j+1}}f)}_\infty^{1/2} \diff s.
\end{equation}
Thus, we need to obtain bounds on the quantity $\norm{\tilde{\Gamma}^{(h_j)}(P_s^{\Lambda_{j+1}}f)}_\infty$. This is done in the following Theorem.

\begin{thm} \label{infsymgradlem}
Let $\Lambda$ be a finite subset of $\ZZ^d$ and consider the Markov semigroup $(P^\Lambda_t)_{t\geq 0}$ generated by the operator
$$\mathcal{L}_\Lambda := \Sumn_{l\in\ZZ^d} L^{(l)} + \Sumn_{l\in\Lambda} e^{(l)}\cdot \nabla_k^{(l)}$$
and denote $f_t:=P_t^\Lambda f$. Assume that 
$$ \eta_l := \max_i \sup_{\omega \in \Omega} \partial_i^{(l)}(b_i^{(l)})(\omega) + (N-1) \max_{m\neq i} \norm{\partial_m^{(l)}(b_i)}_\infty + \sqrt{2} \gamma \max_{\alpha\in R_+} \norm{A_\alpha^{(l)}(b^{(l)})}_\infty <0$$
and
$$ \zeta := \sum_{l\in\ZZ^d} \norm{e^{(l)}}_\infty < \infty.$$
Let $f$ be a cylinder function such that $\Lambda(f)\subset \Lambda$ and let $l\not\in \Lambda(f)$. Let also 
$$ N_{l} =\left[ \frac{d(l,\Lambda(f))}{R}\right] +1 , $$
where the square brackets indicate integer part. Then, for any $\sigma>0$ there exists $\tau\geq 1$ large enough such that if $N_l\geq \tau s$, and we have the bound
\begin{equation} \label{finitespeed}
\norm{\tilde{\Gamma}^{(l)}(f_s)}_\infty 
\leq e^{-2N_l\sigma - 2s\sigma} \sum_{j\in\ZZ^d} \norm{\tilde{\Gamma}^{(j)}(f)}_\infty.
\end{equation}
\end{thm}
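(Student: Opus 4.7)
The plan is to adapt the Bakry--\'Emery computation of Theorem~\ref{findimgradientbounds} to the infinite-dimensional setting, tracking how the symmetrized gradient $\tilde\Gamma^{(l)}(f_s)$ couples to its neighbours through the interaction terms $e^{(l')}\cdot\nabla_k^{(l')}$. For fixed $t>0$ and $l\in\ZZ^d$ I first compute
$$
\frac{\diff}{\diff s}P_{t-s}^\Lambda\bigl(\tilde\Gamma^{(l)}(f_s)\bigr)
=P_{t-s}^\Lambda\!\left(I^{(l)}(f_s)-2\sum_{g\in G}\sum_{i=1}^N \Gamma_{\mathcal{L}_\Lambda}\bigl((T_i^{(l)}f_s)\circ g^{(l)}\bigr)\right),
$$
where $I^{(l)}(h):=2\sum_{g,i}\bigl((T_i^{(l)}h)\circ g^{(l)}\bigr)\bigl(([T_i^{(l)},\mathcal{L}_\Lambda]h)\circ g^{(l)}\bigr)$. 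Assumptions~\eqref{gconditioninf1}--\eqref{gconditioninf2} permit commuting $g^{(l)}$ past $\mathcal{L}_\Lambda$ exactly as in Step~2 of Theorem~\ref{findimgradientbounds}, while~\eqref{Gammaconditioninf}--\eqref{Gammaconditioninf2} make the $\Gamma_{\mathcal{L}_\Lambda}$-term nonnegative, so it may be discarded.

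Next I expand $[T_i^{(l)},\mathcal{L}_\Lambda]$. Since $b^{(l')}$ depends only on $\omega_{l'}$, each $L^{(l')}$ with $l'\neq l$ commutes with $T_i^{(l)}$, and the interaction $e^{(l')}\cdot\nabla_k^{(l')}$ contributes a nonzero commutator only when $|l'-l|<R$. The self-commutator $[T_i^{(l)},L^{(l)}]$ reproduces the coefficient $2\eta_l$ as in Step~3 of Theorem~\ref{findimgradientbounds}. Each cross-commutator produces a pointwise piece $T_i^{(l)}f_s\cdot\partial_i^{(l)}(e_j^{(l')})\cdot T_j^{(l')}f_s$ together with an analogous reflection-difference piece involving $A_\alpha^{(l)}(e_j^{(l')})$, both of which I bound via $2ab\leq a^2+b^2$ and the uniform $C^1$-bounds on $e^{(l')}$ by a constant multiple of $\tilde\Gamma^{(l)}(f_s)+\tilde\Gamma^{(l')}(f_s)$ after symmetrization over $G$. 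The outcome is the coupled inequality
$$
\frac{\diff}{\diff s}P_{t-s}^\Lambda\bigl(\tilde\Gamma^{(l)}(f_s)\bigr)
\leq 2\eta_l P_{t-s}^\Lambda\bigl(\tilde\Gamma^{(l)}(f_s)\bigr)
+C\!\!\sum_{0<|l'-l|<R}\!\! P_{t-s}^\Lambda\bigl(\tilde\Gamma^{(l')}(f_s)\bigr),
$$
with $C$ depending on $R$, $d$, $N$, and the uniform $C^1$-bounds on $e^{(\cdot)}$.

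Setting $u_l(s):=\norm{\tilde\Gamma^{(l)}(f_s)}_\infty$, integrating in $s$ at a fixed $\omega$, dropping the nonpositive $2\eta_l$-contribution, and using the $L^\infty$-contractivity of $P_t^\Lambda$ yields
$$
u_l(s)\leq u_l(0)+C\int_0^s\sum_{0<|l'-l|<R} u_{l'}(r)\,\diff r.
$$
Because $f$ is cylindrical with $\Lambda(f)\subset\Lambda$, $u_j(0)=0$ whenever $j\notin\Lambda(f)$. Iterating the inequality forces every surviving chain $l=l_0\to l_1\to\cdots\to l_k$ with consecutive distances $<R$ to traverse total distance at least $d(l,\Lambda(f))$, so only indices $k\geq N_l$ contribute; bounding the number of such chains by $((2R+1)^d)^k$ gives
$$
u_l(s)\leq \sum_{k\geq N_l}\frac{(\kappa s)^k}{k!}\sum_{j\in\ZZ^d}u_j(0),\qquad \kappa:=C(2R+1)^d.
$$
Given $\sigma>0$, Stirling's bound $k!\geq(k/e)^k$ produces $(\kappa s)^k/k!\leq(e\kappa/\tau)^k$ for every $k\geq N_l\geq\tau s$; taking $\tau$ large enough (depending on $\sigma$ and $\kappa$) to make the resulting geometric tail at most $e^{-4\sigma N_l}$, and using $N_l\geq\tau s\geq s$ to rewrite $-4\sigma N_l\leq -2\sigma N_l-2\sigma s$, delivers the stated bound.

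The step I expect to be most delicate is the cross-commutator analysis: applying $T_i^{(l)}$ to $e^{(l')}(\omega)\,T_j^{(l')}f_s(\omega)$ produces, in addition to the pointwise derivative $\partial_i^{(l)}(e_j^{(l')})\,T_j^{(l')}f_s$, a reflection-difference term evaluating $T_j^{(l')}f_s$ at $\sigma_\alpha^{(l)}\omega$, and only the full sum over $g\in G$ combined with the $G$-invariance of \eqref{gconditioninf1}--\eqref{gconditioninf2} lets one fold these reflected values back into $\tilde\Gamma^{(l')}(f_s)$. The other non-routine point is the combinatorial bookkeeping of the iteration, ensuring that the chain-counting constants are uniform in the truncation $\Lambda$ so the estimate persists in the limit $\Lambda\to\ZZ^d$.
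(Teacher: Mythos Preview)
Your proposal is correct and follows essentially the same route as the paper: differentiate $P^\Lambda_{s-u}(\tilde\Gamma^{(l)}(f_u))$, discard the nonnegative carr\'e-du-champ term, split the commutator into the finite-dimensional self-piece and the finite-range cross-pieces, pass to sup-norms, and iterate the resulting integral inequality so that only chains of length $\geq N_l$ survive, yielding a tail of $\sum_{k\geq N_l}(\kappa s)^k/k!$ which is then controlled by Stirling.

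One small bookkeeping point: your own cross-term estimate $C\bigl(\tilde\Gamma^{(l)}(f_s)+\tilde\Gamma^{(l')}(f_s)\bigr)$ feeds a positive contribution back into the $\tilde\Gamma^{(l)}$-coefficient, so the self-coefficient in your displayed differential inequality is not $2\eta_l$ but $2\eta_l+C'$, which need not be negative. The paper handles this by using the weighted Young inequality $2xy\leq\epsilon x^2+\epsilon^{-1}y^2$ so that the effective self-coefficient $\tilde\eta_l=\eta_l+O(\epsilon)$ stays negative and can be legitimately dropped; alternatively you can simply allow $l'=l$ in the coupling sum, which adds self-loops to the chains but leaves the requirement $k\geq N_l$ and the combinatorial bound intact.
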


\begin{proof}

Fix some $i=1, \ldots, N$. We first look at
\begin{equation} \label{infdimlemstep1}
\begin{aligned}
\partial_u P^\Lambda_{s-u}(|T_i^{(l)}f_u|^2) 
&= P^\Lambda_{s-u}(-\mathcal{L}_\Lambda(|T_i^{(l)}f_u|^2)+2T_i^{(l)} f_u\cdot T_i^{(l)}\mathcal{L}_\Lambda f_u) \\
&= P^\Lambda_{s-u}(-2\Gamma_\Lambda (T_i^{(l)}f_u)+2T_i^{(l)} f_u\cdot [T_i^{(l)},\mathcal{L}_\Lambda]f_u)\\
&\leq P^\Lambda_{s-u} (2T_i^{(l)}f_u\cdot [T_i^{(l)},\mathcal{L}_\Lambda]f_u),
\end{aligned}
\end{equation}
where we used the positivity of $\Gamma_\Lambda$. 

We can check directly that for distinct $l,l'\in \ZZ^d$, the operators $T_i^{(l)}$ and $T_j^{(l')}$ commute for any $i,j=1,\ldots, N$. The same holds for $l=l'$, but here we need to use the fact that Dunkl operators commute. As a consequence (recall also that $b^{(j)}$ depends only on the $\omega_j$ component), we have $[T_i^{(l)},L^{(j)}]=0$ for all $j\neq l$. Thus, we can simplify the commutator
\begin{align*}
 [T_i^{(l)},\mathcal{L}_\Lambda]
&= [T_i^{(l)}, L^{(l)}] + \sum_{j\in\Lambda} [T_i^{(l)}, e^{(j)}\cdot \nabla_k^{(j)}]
\\
&= [T_i^{(l)}, L^{(l)}] + \sum_{j\in\Lambda} [T_i^{(l)}, e^{(j)}]\cdot \nabla_k^{(j)},
\end{align*}
where in the last step we used again the commutativity of $T_i^{(l)}$ with $T_u^{(j)}$. So we next need to compute $[T_i^{(l)}, e^{(j)}]$. We have
\begin{align*}
[T_i^{(l)},e^{(j)}]h
&=\partial_i^{(l)}(he^{(j)}) + \Suma k_\al \al_i A_\al^{(l)}(he^{(j)}) -e^{(j)} \partial_i^{(l)} h - \Suma k_\al \al_i A_\al^{(l)}(h)e^{(j)}  \\
&=h\partial_i^{(l)}(e^{(j)}) - \Suma k_\alpha \al_i [A_\al^{(l)}(h)e^{(j)}-A_\al^{(l)}(he^{(j)})].
\end{align*}
But
\begin{align*}
&A_\al^{(l)}(h)e^{(j)}-A_\al^{(l)}(he^{(j)}) 
\\
&\qquad \qquad \qquad
= \frac{1}{\langle \al, \omega_l\rangle} \left[ e^{(j)}(\omega) (h(\omega)-h(\sigma_\al^{(l)}\omega)) - (h(\omega)e^{(j)}(\omega) - h(\sigma_\al^{(l)}\omega)e^{(j)}(\sigma_\al^{(l)}\omega))\right] \\
&\qquad \qquad \qquad
=-h(\sigma_\al^{(l)}\omega) A_\al^{(l)}(e^{(j)}).
\end{align*}
Thus, we have
$$ [T_i^{(l)},e^{(j)}]h = h\partial_i^{(l)}(e^{(j)}) + \Suma k_\al \al_i h(\sigma_\al^{(l)}\omega) A_\al^{(l)}(e^{(j)}).$$
Therefore, we have obtained
\begin{align} \label{infdimlemstep2}
[T_i^{(l)},\mathcal{L}_\Lambda]h
&=[T_i^{(l)}, L^{(l)}]
+\sum_{j\in\Lambda} \left[\partial_i^{(l)}(e^{(j)})\cdot \nabla_k^{(j)} h + \Suma k_\al \al_i A_\al^{(l)}(e^{(j)})\cdot \nabla_k^{(j)} h(\sigma_\al^{(l)}\omega)\right]. 
\end{align}

Recall the definition of the symmetrised gradient form given in \eqref{symgraddef}. As in the finite case, and due to the assumptions \eqref{gconditioninf1} and \eqref{gconditioninf2},  the computations in \eqref{infdimlemstep1} and \eqref{infdimlemstep2} can be extended to more difficult terms, involving compositions with reflections $g^{(l)}$. For simplicity, we do not include all the details here, but the method is the same as before. Thus, we obtain
\begin{align}
&\frac{\diff}{\diff u} P^\Lambda_{s-u}(\tilde{\Gamma}^{(l)}(f_u)) \notag
\\
&\qquad
\leq \sum_{g\in G} \sum_{i=1}^N P^\Lambda_{s-u} \left(2T_i^{(l)}f_u(g^{(l)}\omega) [T_i^{(l)}, L^{(l)}]f_u(g^{(l)}\omega)\right) \label{infdimterm1}
\\
&\qquad\quad
+ \sum_{g\in G} \sum_{i=1}^N \sum_{m=1}^N \sum_{j \in \Lambda} 2P^\Lambda_{s-u} \left[\left( \partial_i^{(l)}(e^{(j)}_m) T_m^{(j)}f_u T_i^{(l)}f_u \right)\circ g^{(l)} \right] \label{infdimterm2}
\\
&\qquad\quad
+\sum_{g\in G} \sum_{i=1}^N \sum_{m=1}^N \sum_{j \in \Lambda} \Suma 2 k_\alpha \alpha_i P^\Lambda_{s-u}  \left[\left( A_\alpha^{(l)}(e^{(j)}_m)(T_m^{(j)}f_u)\circ \sigma_\alpha^{(l)}T_i^{(l)}f_u\right)\circ g^{(l)}\right]. \label{infdimterm3}
\end{align}
The term in \eqref{infdimterm1} is essentially the finite dimensional case discussed in Theorem \ref{findimgradientbounds} since all operators involved only act on the $l$ component. We thus have
\begin{align*}
\sum_{g\in G} \sum_{i=1}^N P^\Lambda_{s-u} \left(2T_i^{(l)}f_u(g^{(l)}\omega) [T_i^{(l)}, L^{(l)}]f_u(g^{(l)}\omega)\right)
\leq 2 \eta_l P^\Lambda_{s-u}(\tilde{\Gamma}^{(l)}(f_u)),
\end{align*}
where
$$ \eta_l := \max_i \sup_{\omega \in \Omega} \partial_i^{(l)}(b_i^{(l)})(\omega) + (N-1) \max_{m\neq i} \norm{\partial_m^{(l)}(b_i)}_\infty + \sqrt{2} \gamma \max_{\alpha\in R_+} \norm{A_\alpha^{(l)}(b^{(l)})}_\infty.$$
In order to estimate the next two terms, let
$$ E_{l,j} := \max \left\{ \max_{i,m} \norm{\partial_i^{(l)}(e^{(j)}_m)}_\infty, \max_{i,m,\alpha} \norm{A_\alpha^{(l)}(e^{(j)}_m)}_\infty \right\}
\leq \sqrt{2} \max_{i,m} \norm{\partial_i^{(l)}(e^{(j)}_m)}_\infty,$$
where the second inequality follows from \eqref{representationformula}. Clearly $E_{l,j} \geq 0$ for all $l,j\in \ZZ^d$, but note also that $E_{l,j}=0$ if $d(l,j)\geq R$ since $e^{(l)}$ only depends on the components $\omega_j$ with $d(i,j)<R$. The strategy in estimating the remaining terms \eqref{infdimterm2} and \eqref{infdimterm3} is to use the inequality $2xy \leq \epsilon x^2 + \frac{1}{\epsilon} y^2$ to separate the products into sums of squares. We first estimate
\begin{align*}
&\sum_{g\in G} \sum_{i=1}^N \sum_{m=1}^N \sum_{j \in \Lambda} 2 P^\Lambda_{s-u} \left[\left( \partial_i^{(l)}(e^{(j)}_m) T_m^{(j)}f_u T_i^{(l)}f_u \right)\circ g^{(l)} \right]
\\
&\qquad
\leq N \sum_{g\in G} \sum_{j \in \Lambda} E_{l,j} P^\Lambda_{s-u} \left[\epsilon |\nabla_k^{(l)}f_u|^2 \circ g^{(l)} + \frac{1}{\epsilon}|\nabla_k^{(j)}f_u|^2 \circ g^{(l)} \right]
\\
&\qquad
= \epsilon N \left(\sum_{j\in\Lambda} E_{l,j}\right) P^\Lambda_{s-u} (\tilde{\Gamma}^{(l)}(f_u))
+ \frac{1}{\epsilon} N \sum_{g\in G} \sum_{j \in \Lambda} E_{l,j} P^\Lambda_{s-u} \left[|\nabla_k^{(j)}f_u|^2 \circ g^{(l)} \right]
\\
&\qquad
\leq \epsilon N \left(\sum_{j\in\Lambda} E_{l,j}\right) P^\Lambda_{s-u} (\tilde{\Gamma}^{(l)}(f_u))
+ \frac{1}{\epsilon} N \sum_{g\in G} \sum_{j \in \Lambda} E_{l,j} P^\Lambda_{s-u} \left[(\tilde{\Gamma}^{(j)}f_u) \circ g^{(l)} \right],
\end{align*}
and similarly
\begin{align*}
&\sum_{g\in G} \sum_{i=1}^N \sum_{m=1}^N \sum_{j \in \Lambda} \Suma 2 k_\alpha \alpha_i P^\Lambda_{s-u}  \left[\left( A_\alpha^{(l)}(e^{(j)}_m)(T_m^{(j)}f_u)\circ \sigma_\alpha^{(l)}T_i^{(l)}f_u\right)\circ g^{(l)}\right]
\\
&\qquad
\leq \sum_{g\in G} \sum_{j \in \Lambda} \Suma \sqrt{2} N k_\alpha E_{l,j}P^\Lambda_{s-u} \left[\epsilon |\nabla_k^{(l)}f_u|^2 \circ g^{(l)} + \frac{1}{\epsilon} |\nabla_k^{(j)}f_u|^2 \circ (\sigma_\alpha^{(l)} \circ g^{(l)}) \right]
\\
&\qquad
= \sqrt{2}N \gamma \epsilon \left( \sum_{j\in\Lambda} E_{l,j} \right) P^\Lambda_{s-u} (\tilde{\Gamma}^{(l)}(f_u))
+\sqrt{2}N \gamma \frac{1}{\epsilon} \sum_{g\in G} \sum_{j \in \Lambda} E_{l,j} P^\Lambda_{s-u} \left[ |\nabla_k^{(j)}f_u|^2 \circ g^{(l)} \right]
\\
&\qquad
\leq \sqrt{2}N \gamma \epsilon \left( \sum_{j\in\Lambda} E_{l,j} \right) P^\Lambda_{s-u} (\tilde{\Gamma}^{(l)}(f_u))
+ \sqrt{2}N \gamma \frac{1}{\epsilon} \sum_{g\in G} \sum_{j \in \Lambda} E_{l,j} P^\Lambda_{s-u} \left[ \tilde{\Gamma}^{(j)}(f_u) \circ g^{(l)} \right].
\end{align*}

Combining all these results, we have obtained that
\begin{align} \label{infdimineq1}
\frac{\diff}{\diff u} P^\Lambda_{s-u}(\tilde{\Gamma}^{(l)}(f_u))
&\leq 2\tilde{\eta}_l P^\Lambda_{s-u}(\tilde{\Gamma}^{(l)}(f_u))
+  \sum_{g\in G} \sum_{j \in \Lambda} \tilde{E}_{l,j} P^\Lambda_{s-u} \left[ \tilde{\Gamma}^{(j)}(f_u) \circ g^{(l)} \right],
\end{align}
where
\begin{equation} \label{tildeetal} 
\tilde{\eta}_l := \eta_l + \epsilon \frac{N}{2} (1+\sqrt{2}\gamma) \left( \sum_{j\in\Lambda} E_{l,j} \right),
\end{equation}
and
$$ \tilde{E}_{l,j} := \frac{1}{\epsilon}N (1+\sqrt{2}\gamma) E_{l,j}.$$
We choose $\epsilon>0$ such that $\tilde{\eta}_l <0$; this is possible because by our assumption $\eta_l <0$. Multiplying inequality \eqref{infdimineq1} by $e^{-2\tilde{\eta}_l u}$, and integrating with respect to $u$  from $0$ to $s$, we obtain
\begin{align*} 
\tilde{\Gamma}^{(l)}(f_s) 
\leq e^{2\tilde{\eta}_l s} P^\Lambda_s(\tilde{\Gamma}^{(l)}f) 
+ \sum_{g\in G} \sum_{j \in \Lambda} \tilde{E}_{l,j} \int_0^s e^{2\tilde{\eta}_l(s-u)} P^\Lambda_{s-u} \left[ \tilde{\Gamma}^{(j)}(f_u) \circ g^{(l)} \right] \diff u.
\end{align*}
Taking supremum norm over this inequality, and since $\tilde{\eta}_l <0$, we have
\begin{equation} \label{infdimineq2} 
\norm{\tilde{\Gamma}^{(l)}(f_s)}_\infty
\leq \norm{\tilde{\Gamma}^{(l)}f}_\infty 
+ |G| \sum_{j\in\Lambda} \tilde{E}_{l,j} \int_0^s \norm{\tilde{\Gamma}^{(j)}(f_u)}_\infty \diff u.
\end{equation}

Up until this point we have not used the assumption that $l\not\in \Lambda(f)$ so this inequality holds for general $l\in \ZZ^d$. Since $l\not\in \Lambda(f)$, then $f$ does not depend on the $\omega_l$ coordinate and so the first term on the right hand side of inequality \eqref{infdimineq2} vanishes.

We can improve inequality \eqref{infdimineq2} by applying itself iteratively to the term under the integral. At the first step we obtain
\begin{align*}
\norm{\tilde{\Gamma}^{(l)}(f_s)}_\infty 
&\leq |G| \sum_{j\in\Lambda} \tilde{E}_{l,j} \int_0^s \left[ \norm{\tilde{\Gamma}^{(j)}f}_\infty 
+ |G| \sum_{j'\in\Lambda} \tilde{E}_{j,j'} \int_0^u \norm{\tilde{\Gamma}^{(j')}(f_v)}_\infty \diff v \right]\diff u
\\
&\leq s|G| \sum_{j\in\Lambda} \tilde{E}_{l,j} \norm{\tilde{\Gamma}^{(j)}f}_\infty 
+ |G|^2 \sum_{j\in\Lambda}\sum_{j'\in\Lambda} \tilde{E}_{l,j}\tilde{E}_{j,j'} \int_0^s \int_0^u \norm{\tilde{\Gamma}^{(j')}(f_u)}_\infty \diff v \diff u.
\end{align*}
Note that if $d(l,\Lambda(f)) \geq R$ and $d(j,l)<R$, then $j\not\in \Lambda(f)$, so the first sum vanishes again. In fact, we need to apply this method iteratively $N_l = \left[\frac{d(l,\Lambda(f)}{R} \right]+1$ times in order to get a non-trivial contribution. After infinitely many iterations, we obtain 
\begin{align*}
\norm{\tilde{\Gamma}^{(l)}(f_s)}_\infty  
&\leq \left(\frac{(sC_l)^{N_l}}{N_l!} + \frac{(sC_l)^{N_l+1}}{(N_l+1)!} + \ldots \right) \sum_{j\in\ZZ^d} \norm{\tilde{\Gamma}^{(j)}(f)}_\infty  
\\
&\leq \frac{(sC_l)^{N_l}}{(N_l)!} e^{sC_l} \sum_{j\in\ZZ^d} \norm{\tilde{\Gamma}^{(j)}(f)}_\infty ,
\end{align*}
where 
\begin{equation} \label{cldefn}
C_l=|G| \sum_{j\in\Lambda} \tilde{E}_{l,j}.
\end{equation}
Note that the factorials appearing in the denominators above are a result of computing nested integrals. 

Let us estimate the constant appearing in this inequality. Using the fact that for any natural number $n$ we have $n! > \left( \frac{n}{e} \right)^n$ (which follows from the expansion of $e^n$), we obtain
$$ \norm{\tilde{\Gamma}^{(l)}(f_s)}_\infty 
\leq e^{sC_l + N_l(\log sC_l - \log N_l +1)} \sum_{j\in\ZZ^d} \norm{\tilde{\Gamma}^{(j)}(f)}_\infty.$$
Now fix $\sigma >0$. There exists $\tau \geq 0$ large enough such that
\begin{equation} \label{condition}
\log \frac{C_l}{\tau} + \frac{C_l}{\tau} +1 \leq -4 \sigma.
\end{equation}
If, in addition, $N_l \geq \tau s$, then
$$ sC_l + N_l(\log sC_l - \log N_l +1) \leq N_l \left( \frac{C_l}{\tau} + \log \frac{C_l}{\tau} +1 \right) \leq -4 \sigma N_l \leq -2N_l\sigma - 2s\sigma.$$
Finally, this gives
\begin{equation} \label{BEineq}
\norm{\tilde{\Gamma}^{(l)}(f_s)}_\infty 
\leq e^{-2N_l\sigma - 2s\sigma} \sum_{j\in\ZZ^d} \norm{\tilde{\Gamma}^{(j)}(f)}_\infty,
\end{equation}
as required.
\end{proof}

Now that we have proved the required bounds on the symmetrised gradient, we can go back to the definition of the infinite dimensional semigroup. Fix $\sigma>0$ and let $\tilde{N}=\left[ \frac{d(\ZZ^d\setminus\Lambda, \Lambda(f))}{R}\right]+1$. By definition, we have $N_{h_j}\geq \tilde{N}$ for all $j=0,\ldots, n-1$. From (\ref{cauchyseq1}) and (\ref{cauchyseq2}), with the help of Theorem \ref{infsymgradlem}, we obtain
\begin{equation} \label{PtLambdadifference}
\begin{aligned}
\norm{P_t^{\Lambda'}f-P_t^{\Lambda}f}_\infty
&\leq \left( \Sumn_{l\in\ZZ^d} \norm{\tilde{\Gamma}^{(l)}(f)}_\infty \right)^{1/2} \Sumn_{j=0}^{n-1} \norm{e^{(h_j)}}_\infty \Intr_0^t e^{-\sigma N_{h_j}-\sigma s} \diff s \\
&\leq \zeta e^{-\sigma \tilde{N}} \frac{1-e^{\sigma t}}{\sigma}  \left( \Sumn_{l\in\ZZ^d} \norm{\tilde{\Gamma}^{(l)}(f)}_\infty \right)^{1/2},
\end{aligned}
\end{equation}
which holds for all $\tilde{N} \geq \tau t$, where $\tau \geq 1$ is fixed. Recall that $\zeta = \sum_{l\in\ZZ^d} \norm{e^{(l)}}_\infty$ was assumed to be finite.

This concludes the proof that for each increasing sequence $(\Lambda_n)_{n\geq 1}$ of finite subsets of $\ZZ^d$, and for each cylindrical function $f$, the sequence $(P_t^{\Lambda_n}f)_n$ is Cauchy. Therefore, we have proved the following Theorem.

\begin{thm} \label{defninfsemigroup}
Consider the operator
$$ \mathcal{L} := \sum_{l\in\ZZ^d} (\Delta_k^{(l)} + b^{(l)} \cdot \nabla_k^{(l)} )
+ \sum_{l\in\ZZ^d} e^{(l)} \cdot \nabla_k^{(l)}.$$
Assume that the coefficients $b^{(l)}$ and $e^{(l)}$ are uniformly bounded such that
$$ \eta_l := \max_i \sup_{\omega \in \Omega} \partial_i^{(l)}(b_i^{(l)})(\omega) + (N-1) \max_{m\neq i} \norm{\partial_m^{(l)}(b_i)}_\infty + \sqrt{2} \gamma \max_{\alpha\in R_+} \norm{A_\alpha^{(l)}(b^{(l)})}_\infty <0$$
and
$$ \zeta := \sum_{l\in\ZZ^d} \norm{e^{(l)}}_\infty < \infty.$$
Then, for any cylinder function $f$, the following limit exists in the uniform norm
$$ P_tf:=\displaystyle\lim_{\Lambda\to\ZZ^d} P_t^\Lambda f,$$
and it defines a Markov semigroup $(P_t)_{t \geq 0}$ on $C_b(\Omega)$, with infinitesimal generator given by $\mathcal{L}$.
\end{thm}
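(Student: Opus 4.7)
The inequality \eqref{PtLambdadifference} derived just above the statement is essentially the whole proof, and what remains is to package it cleanly. Fix a cylindrical function $f$, a $\sigma>0$, and the constant $\tau\geq 1$ supplied by Theorem \ref{infsymgradlem}. For any increasing sequence $(\Lambda_n)_{n\in\NN}$ of finite subsets of $\ZZ^d$ exhausting $\ZZ^d$, the quantity $\tilde{N}_n:=\bigl[d(\ZZ^d\setminus\Lambda_n,\Lambda(f))/R\bigr]+1$ tends to $\infty$, so for every fixed $t>0$ we eventually have $\tilde{N}_n\geq \tau t$. Applying the telescoping argument leading to \eqref{PtLambdadifference} to any pair $\Lambda_n\subset\Lambda_m$, we get
$$\norm{P_t^{\Lambda_m}f-P_t^{\Lambda_n}f}_\infty \leq \zeta\,\frac{1-e^{-\sigma t}}{\sigma}\,e^{-\sigma \tilde{N}_n}\Bigl(\sum_{l\in\ZZ^d}\norm{\tilde{\Gamma}^{(l)}(f)}_\infty\Bigr)^{1/2},$$
and the right-hand side tends to $0$ as $n\to\infty$ (note that the sum is finite because $\tilde{\Gamma}^{(l)}(f)\equiv 0$ whenever $l\notin\Lambda(f)$, so only finitely many terms are nonzero). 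Hence $(P_t^{\Lambda_n}f)_n$ is Cauchy in the uniform norm. Interleaving two such sequences shows that the limit $P_tf:=\lim_n P_t^{\Lambda_n}f$ is independent of the approximating sequence, and the convergence is uniform in $t$ on compact time intervals.

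The plan for the extension to $C_b(\Omega)$ is a standard $3\varepsilon$ argument. Each $P_t^\Lambda$ is a Markov semigroup, hence a contraction on $C_b(\Omega)$, so for cylindrical $f,g$ one has $\norm{P_tf-P_tg}_\infty \leq \norm{f-g}_\infty$ in the limit. Since cylindrical functions are dense in $C_b(\Omega)$ (in the sense needed, cf.\ \cite{GZ}), $P_t$ extends uniquely to a contraction on $C_b(\Omega)$ and the limit $P_t^\Lambda f \to P_tf$ persists for every $f\in C_b(\Omega)$. The semigroup property $P_{t+s}=P_tP_s$ then follows by writing $P_{t+s}^\Lambda f = P_t^\Lambda(P_s^\Lambda f)$ and inserting $\pm P_t^\Lambda(P_sf)$: the contractivity of $P_t^\Lambda$ controls $\norm{P_t^\Lambda(P_s^\Lambda f)-P_t^\Lambda(P_sf)}_\infty \leq \norm{P_s^\Lambda f-P_sf}_\infty\to 0$, while $P_t^\Lambda(P_sf)\to P_t(P_sf)$ by the construction of $P_t$. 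The Markov properties (positivity preservation, $P_t\1=\1$) pass trivially to the limit since each $P_t^\Lambda$ satisfies them pointwise.

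Finally, for the identification of the generator one uses that if $f$ is cylindrical with $\Lambda(f)\subset \Lambda$ then $L^{(l)}f=0$ and $e^{(l)}\cdot\nabla_k^{(l)}f=0$ for every $l\notin\Lambda(f)$ (recall $b^{(l)}$ depends only on $\omega_l$ and $f$ does not), so $\mathcal{L}_\Lambda f = \mathcal{L} f$ for every such $\Lambda$. Consequently $P_t^\Lambda f - f = \int_0^t P_s^\Lambda \mathcal{L} f\,ds$ whenever $\mathcal{L} f$ is itself cylindrical and supported in $\Lambda$, and passing to the limit $\Lambda\to\ZZ^d$ (using uniform convergence on bounded time intervals and the contractivity of the $P_s^\Lambda$) yields $P_tf-f=\int_0^t P_s\mathcal{L}f\,ds$, which identifies $\mathcal{L}$ as the infinitesimal generator on the core of cylindrical functions. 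The main obstacle in carrying this out cleanly is the bookkeeping in this last step: one must verify that the uniform-in-$\Lambda$ control provided by \eqref{PtLambdadifference} and by the gradient bound \eqref{finitespeed} is strong enough to let one exchange the $\Lambda$-limit with the time integral, and that the convergence of $\mathcal{L}_\Lambda f$ to $\mathcal{L}f$ (which is actually an equality for $\Lambda$ large) combines correctly with the contraction property of the approximating semigroups. Everything else reduces to reading off consequences of the already-proven estimate.
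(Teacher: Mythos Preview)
Your proposal is correct and follows exactly the paper's approach: the paper's proof consists entirely of the estimate \eqref{PtLambdadifference} derived in the text preceding the theorem, after which it simply declares ``Therefore, we have proved the following Theorem.'' You have in fact supplied more detail than the paper does, spelling out the independence of the limit from the approximating sequence, the extension to $C_b(\Omega)$, the semigroup and Markov properties, and the identification of the generator---all of which the paper leaves implicit.
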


\subsection{Invariant measure}

In this section we discuss the existence of an invariant measure for the infinite dimensional semigroup that we have just defined. The strategy is the same as in section \ref{SEC:invmeasfinite}, and we start by defining a Lyapunov function $\rho$ that is suitable to our new setting.

Let $(a_l)_{l\in\ZZ^d} \subset (0,\infty)$ be an absolutely convergent series, i.e.,
$$ \sum_{l\in\ZZ^d} a_l < \infty.$$
We assume that for each $l\in\ZZ^d$ there exists a function $\rho_l : \Omega \to (0,\infty)$, which depends only on the $\omega_l$ coordinate, and such that $\omega_l \mapsto \rho_l(\omega)$ is a Lyapunov function for the operator $L^{(l)}$ on $\RR^N$. Moreover, assume that there exist constants $C_1 \geq 0$ and $C_2 >0$ such that 
\begin{equation} \label{assumptionrhol} 
L^{(l)} \rho_l + e^{(l)}\cdot \nabla_k^{(l)} \rho_l \leq C_1 - C_2 \rho_l \qquad \forall l\in\ZZ^d.
\end{equation}
For each $0<r\leq \infty$, define the set
$$ \Omega_r :=\left\{ \omega \in \Omega : \sum_{l\in\ZZ^d} a_l \rho_l(\omega) < r \right\},$$
and consider the function $\rho: \Omega_\infty \to (0,\infty)$ given by
$$ \rho(\omega) := \sum_{l\in\ZZ^d} a_l \rho_l(\omega).$$
Using condition \eqref{assumptionrhol} and the fact that $\rho_l$ only depends on the $\omega_l$ coordinate, we have
\begin{equation} \label{infdimlyapunov}
\mathcal{L} \rho 
= \sum_{l\in\ZZ^d} a_l (L^{(l)} \rho_l + e^{(l)}\cdot \nabla_k^{(l)} \rho_l)
\leq \sum_{l\in\ZZ^d} a_l (C_1 - C_2 \rho_l) 
= C_1 \sum_{l\in\ZZ^d} a_l - C_2 \rho.
\end{equation}
Then $\rho$ is a Lyapunov function for $\mathcal{L}$.

\begin{remark}
We could take, as in section \ref{SEC:invmeasfinite}, $\rho_l(\omega) = |\omega_l| \chi(|\omega_l|)$ for a cut-off function $\chi$. If $b^{(l)}$ satisfies
$$ \frac{\langle \omega_l, b^{(l)}(\omega)\rangle}{|\omega_l|^2} \leq - C \qquad \forall \omega_l\in\RR^N,$$
and if the interaction coefficients $e^{(l)}$ are sufficiently small, then more generally $\rho_l$ satisfies the assumption \eqref{assumptionrhol}.
\end{remark}

\begin{remark}
We can relax assumption \eqref{assumptionrhol} to 
$$ L^{(l)} \rho_l \leq C_1 - C_2 \rho_l,$$
and
$$ e^{(l)} \cdot \nabla_l^{(l)} \rho_l \leq C_3 - \sum_{j\in\ZZ^d} \epsilon_{l,j} \rho_j,$$
for suitably small positive numbers $\epsilon_{l,j}$. With a little more effort, the proof below can be changed to accommodate this case.
\end{remark}

We are ready to prove the main result of this section.

\begin{thm}
Let $\rho$ be defined as above, and let $(P_t)_{t\geq 0}$ be the Markov semigroup with generator $\mathcal{L}$ defined in Theorem \ref{defninfsemigroup}. Then, for any $\omega\in \Omega_\infty$ there exists a subsequence $(P_{t_j})_{j\geq 0}$, and a probability measure $\nu_\omega$ such that for all $f\in C(\Omega)$ bounded cylinder functions, we have 
$$ P_{t_j} f(\omega) \to \int f \diff\nu_\omega \qquad \text{ as } j\to\infty.$$
Furthermore, $\nu_\omega(\Omega_\infty)=1$.
\end{thm}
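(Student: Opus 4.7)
The plan is to imitate the argument of Theorem \ref{invmeasthm} in the infinite-dimensional setting, replacing the Euclidean cut-off by the Lyapunov function $\rho$ just constructed and using the product topology on $\Omega$. Fix $\omega \in \Omega_\infty$, so that $\rho(\omega) < \infty$, and define the family of probability measures on $\Omega$ by $p_t^\omega(A) := P_t(\mathds{1}_A)(\omega)$ for Borel sets $A$. The heart of the argument is to show that this family is tight, extract a weakly convergent subsequence by Prokhorov's theorem, and then verify that the limit concentrates on $\Omega_\infty$.

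First I would establish the infinite-dimensional analogue of Lemma \ref{lemmainvmeas}: for every $\omega \in \Omega_\infty$ there exists $C = C(\omega) > 0$ with $P_t \rho(\omega) \leq C$ uniformly in $t$. The interpretation is the one already used in the finite-dimensional case, namely $P_t\rho(\omega) = \lim_{a\to\infty} P_t(\rho \wedge a)(\omega)$, combined with the finite-cylinder approximation $\rho^{(n)}(\omega) := \sum_{|l|\le n} a_l \rho_l(\omega)$, for which $\mathcal{L} \rho^{(n)}$ is well defined and, by \eqref{infdimlyapunov}, satisfies $\mathcal{L}\rho^{(n)} \leq C_1 \sum_l a_l - C_2 \rho^{(n)}$. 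Applying $\frac{d}{dt}P_t(\rho^{(n)} \wedge a) \leq C_1\sum_l a_l - C_2 P_t(\rho^{(n)} \wedge a)$ and integrating gives, after passing to the limit $a \to \infty$ and then $n \to \infty$ by monotone convergence, the bound $P_t\rho(\omega) \leq \rho(\omega) + \frac{C_1}{C_2}\sum_l a_l =: C(\omega)$.

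Next I would verify tightness of $(p_t^\omega)_{t>0}$ on $\Omega$ endowed with the product topology. For $M>0$ consider $K_M := \{\eta \in \Omega : \rho(\eta) \leq M\}$. Since $\rho(\eta) \leq M$ forces $\rho_l(\eta) \leq M/a_l$ for every $l$, and since each $\rho_l$ is a Lyapunov function on $\RR^N$ (hence has compact sublevel sets in the $\omega_l$ coordinate), $K_M$ is contained in a product of compact sets in $\RR^N$; by Tychonoff's theorem it is compact (and it is closed since $\rho$ is lower semicontinuous as a monotone limit of continuous functions). Markov's inequality combined with the bound from the previous step gives
$$ p_t^\omega(\Omega \setminus K_M) \leq \frac{1}{M} P_t\rho(\omega) \leq \frac{C(\omega)}{M}, $$
which implies tightness. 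Prokhorov's theorem then yields a subsequence $(p_{t_j}^\omega)_{j\ge 0}$ converging weakly to some probability measure $\nu_\omega$ on $\Omega$; applied to bounded cylinder $f \in C(\Omega)$ this gives $P_{t_j}f(\omega) \to \int f\,d\nu_\omega$, as required.

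Finally, to obtain $\nu_\omega(\Omega_\infty) = 1$, I would truncate: for each $n$ the function $\rho^{(n)} \wedge M$ is bounded and continuous on $\Omega$ with the product topology, so weak convergence together with the uniform bound gives
$$ \int (\rho^{(n)} \wedge M)\, d\nu_\omega = \lim_{j\to\infty}\int(\rho^{(n)}\wedge M)\,dp_{t_j}^\omega \leq \sup_t P_t\rho(\omega) \leq C(\omega). $$
Letting $M \to \infty$ and then $n \to \infty$ via monotone convergence gives $\int \rho\, d\nu_\omega \leq C(\omega) < \infty$, so $\rho < \infty$ $\nu_\omega$-a.s., i.e.\ $\nu_\omega(\Omega_\infty)=1$. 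The main obstacle is the careful handling of the infinite sum defining $\rho$ when acted on by $\mathcal{L}$ and when passing to the weak limit; this is where the cylinder truncation $\rho^{(n)}$ and the cap $\wedge M$ are essential, since $\rho$ itself is neither bounded nor cylindrical.
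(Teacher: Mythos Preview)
Your proposal is correct and follows essentially the same route as the paper: define $p_t^\omega$, obtain a uniform bound $P_t\rho(\omega)\le C$ from the Lyapunov inequality \eqref{infdimlyapunov}, use Markov's inequality to get tightness, apply Prokhorov, and then use the same uniform bound to conclude $\nu_\omega(\Omega_\infty)=1$. In fact you supply more detail than the paper does in two places: you justify compactness of the sublevel sets $\{\rho\le M\}$ in the product topology via Tychonoff, and you handle the passage to the weak limit for $\int\rho\,d\nu_\omega$ carefully through the double truncation $\rho^{(n)}\wedge M$, whereas the paper simply writes $\int\rho\,d\nu_\omega=\frac{1}{r}\sup_l P_{t_l}\rho(\omega)$ without elaboration.
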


\begin{proof}
Fix $\omega\in \Omega_\infty$ and define
$$ p_t^\omega(A) := P_t(\1_A)(\omega).$$
Using the same approach as in Lemma \ref{lemmainvmeas}, from \eqref{infdimlyapunov} we can show that there exists a $C>0$ such that 
\begin{equation} \label{infdimpt} 
P_t \rho (\omega) \leq C \qquad \text{ for all } t>0.
\end{equation}
As in the proof of Theorem \ref{invmeasthm}, using \eqref{infdimpt} and Prokhorov's theorem, we deduce that there exists a subsequence $(p_{t_j}^\omega)_{j\geq 0}$ which converges weakly to some probability measure $\nu_\omega$. 

Moreover, using Markov's inequality we have for any $r>0$ that
$$ 0\leq 1 -\nu_\omega(\Omega_r) \leq \frac{1}{r} \int \rho \diff\nu_\omega 
=\frac{1}{r} \sup_{l\in\NN} P_{t_l}\rho (\omega) \leq \frac{C}{r},$$
where in the last step we used \eqref{infdimpt}. Taking $r\to\infty$, we obtain $\nu_\omega(\Omega_\infty)=1$. 
\end{proof}

For any $\epsilon >0$ for which we have $\sum_{l\in\ZZ^d} (1+|l|)^{-\epsilon} <\infty$, define also the set 
$$\tilde{\Omega}_\epsilon := \left\{ \omega\in \Omega : \sum_{l\in\ZZ^d} \frac{|\omega_l|}{(1+|l|)^\epsilon} < \infty \right\}.$$
Note that for $\rho_l$ defined as in the previous remark, and with $a_l = (1+|l|)^{-\epsilon}$, then $\tilde{\Omega}_\epsilon = \Omega_\infty$. 

\begin{remark}
From the proof below it will be clear that instead of the weights $(1+|l|)^{-\epsilon}$ we could take more general weights in the definition of $\tilde{\Omega}_\epsilon$, even of exponential decay.
\end{remark}

\begin{thm}
Assume that $\gamma <\frac{1}{2}$. Let $(P_t)_{t\geq 0}$ be the Markov semigroup defined in Theorem \ref{defninfsemigroup} and assume in addition that
$$ \tilde{\eta} := \sup_{l\in\ZZ^d} \tilde{\eta}_l <0 ,$$
and
$$ \tilde{C} := \sup_{l\in\ZZ^d} C_l \leq -2\tilde{\eta},$$
where the constants $\tilde{\eta_l}$ and $C_l$ were defined in \eqref{tildeetal} and \eqref{cldefn}, respectively.

For any bounded cylinder function $f$ and any $\omega, \omega'\in \tilde{\Omega}_\epsilon$, there exists $t_0>0$ and a constant $C(f,\omega,\omega')<\infty$, such that
$$ |P_tf(\omega) - P_tf(\omega')| \leq C(f,\omega, \omega') e^{-ct},$$
where $c>0$ is independent of $f, t, \omega, \omega'$.
\end{thm}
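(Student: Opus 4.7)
The plan is to adapt Step~2 of the proof of Theorem~\ref{invmeasthm} to the infinite-dimensional setting, combining it with the gradient bounds of Theorem~\ref{infsymgradlem}. First I approximate $\omega'$ by configurations $\omega'_\Lambda$ agreeing with $\omega$ outside a finite set $\Lambda \supset \Lambda(f)$; since $P_tf$ is the uniform limit of the cylinder functions $P_t^\Lambda f$ (Theorem~\ref{defninfsemigroup}), it suffices to bound $|P_tf(\omega) - P_tf(\omega'_\Lambda)|$ uniformly in $\Lambda$. I telescope through intermediate configurations $\xi^{(0)}=\omega, \xi^{(1)}, \ldots, \xi^{(|\Lambda|)}=\omega'_\Lambda$ that change one coordinate $l \in \Lambda$ at a time, so each single-site difference becomes a one-dimensional line integral of $\nabla^{(l)} P_tf$ along the segment joining the two configurations.

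For each $l$, I convert the ordinary gradient to the Dunkl gradient via the defining identity $\partial_i^{(l)} = T_i^{(l)} - \Suma k_\alpha \alpha_i A_\alpha^{(l)}$ and iterate as in Step~2 of the proof of Theorem~\ref{invmeasthm}. Since the reflections $\sigma_\alpha^{(l)}$ act only on the $l$-th coordinate, the iteration is structurally identical to the finite-dimensional one, and the hypothesis $\gamma < \tfrac12$ makes the resulting geometric series converge, yielding
$$|\nabla^{(l)} P_tf(\eta)| \leq C_\gamma \norm{\tilde{\Gamma}^{(l)}(P_tf)}_\infty^{1/2} \qquad \forall \eta \in \Omega,$$
for a constant $C_\gamma$ depending only on the multiplicity.

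The remaining step is to bound $\norm{\tilde{\Gamma}^{(l)}(P_tf)}_\infty$ uniformly in $l$ with sufficient decay both in $|l|$ and in $t$. For $l$ with $N_l \geq \tau t$, Theorem~\ref{infsymgradlem} gives directly the bound $e^{-2N_l\sigma - 2t\sigma} S_0(f)$, where $S_0(f) := \sum_j \norm{\tilde{\Gamma}^{(j)}(f)}_\infty$ is finite because $f$ is a cylinder function. For the remaining finitely many $l$ (those in $\Lambda(f)$ or inside an $O(t)$-neighbourhood of it), I apply a Gronwall-type argument to the integral inequality~\eqref{infdimineq1}, using the hypotheses $\tilde{\eta} < 0$ and $\tilde{C} \leq -2\tilde{\eta}$, to obtain $\norm{\tilde{\Gamma}^{(l)}(P_tf)}_\infty \leq e^{-2c_1 t} S_0(f)$ for some $c_1 > 0$ uniform in $l$. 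Substituting the two types of bounds into the telescoped sum gives
$$|P_tf(\omega) - P_tf(\omega'_\Lambda)| \leq C_\gamma S_0(f)^{1/2} \sum_{l \in \Lambda} |\omega_l - \omega'_l|\,e^{-N_l\sigma - ct},$$
with $c := \min(\sigma, c_1)$. Because $e^{-N_l \sigma}$ decays faster than any polynomial $(1+|l|)^\epsilon$, the assumption $\omega, \omega' \in \tilde{\Omega}_\epsilon$ ensures that the sum converges to a finite $C(\omega, \omega')$ uniformly in $\Lambda$. Letting $\Lambda \uparrow \ZZ^d$ and using continuity of $P_tf$ yields the claim.

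The main obstacle is obtaining the uniform exponential-in-$t$ decay of $\norm{\tilde{\Gamma}^{(l)}(P_tf)}_\infty$ for sites close to $\Lambda(f)$, where Theorem~\ref{infsymgradlem} does not directly apply. This is precisely where the condition $\tilde{C} \leq -2\tilde{\eta}$ enters: it closes the Gronwall loop derived from~\eqref{infdimineq1} and prevents the interaction terms from overwhelming the single-site dissipation. A secondary technical point is the careful passage to the limit $\Lambda \uparrow \ZZ^d$, which uses the continuity of $P_tf$ arising from its construction in Theorem~\ref{defninfsemigroup}.
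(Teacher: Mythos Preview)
Your strategy is close to the paper's and would go through, but there is one genuine technical slip and one organisational difference worth noting.

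\textbf{The slip.} You apply Theorem~\ref{infsymgradlem} and inequality~\eqref{infdimineq1} directly to $P_tf$, but both of these are proved only for the truncated evolutions $f_s=P_s^\Lambda f$. Uniform convergence $P_t^\Lambda f\to P_tf$ (which is all Theorem~\ref{defninfsemigroup} gives) does not by itself imply convergence of $\tilde{\Gamma}^{(l)}(P_t^\Lambda f)$ to $\tilde{\Gamma}^{(l)}(P_tf)$, so you cannot simply invoke those bounds for the limiting semigroup. The paper avoids this problem by a three-term split
\[
|P_tf(\omega)-P_tf(\omega')|\le |P_tf(\omega)-P_t^\Lambda f(\omega)|+|P_t^\Lambda f(\omega)-P_t^\Lambda f(\omega')|+|P_t^\Lambda f(\omega')-P_tf(\omega')|,
\]
so that the gradient estimates are only ever applied to the \emph{truncated} semigroup in the middle term. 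The outer terms are handled by the approximation bound~\eqref{PtLambdadifference}. You could fix your argument in the same way: telescope $P_t^\Lambda f$ (which is a genuine cylinder function with support in $\Lambda^R$), obtain a bound uniform in $\Lambda$, and only then let $\Lambda\uparrow\ZZ^d$.

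\textbf{The organisational difference.} You split sites into ``far'' ($N_l\ge\tau t$, handled by the finite-speed estimate~\eqref{finitespeed}) and ``near'' ($N_l<\tau t$, handled by the Gronwall bound coming from~\eqref{infdimineq1}). The paper instead chooses $\Lambda=B_{rt}(0)$ growing linearly in $t$: then the two outer terms inherit exponential decay in $t$ from $e^{-\sigma\tilde N}$ with $\tilde N\sim t$, while the middle term runs the telescope only over the finitely many coordinates in $\Lambda^R$ and uses the Gronwall bound~\eqref{infdimgradboundstep1} uniformly there. Both routes use the same two ingredients (finite speed of propagation for distant sites, dissipation $2\tilde\eta+\tilde C\le 0$ for nearby sites) and both need the $\tilde\Omega_\epsilon$ hypothesis to control the polynomial growth of the number of near sites; the paper's packaging is slightly cleaner because it never touches derivatives of the limiting $P_tf$.
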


\begin{proof}
Fix a cylinder function $f$ and $\omega,\omega' \in \tilde{\Omega}_\epsilon$. Fix $\Lambda \subset \ZZ^d$ finite and such that $\Lambda(f) \subset \Lambda$. We then have
\begin{equation} \label{3ineqinfdimmeas}
\begin{aligned}
&|P_{t} f(\omega) - P_{t_j}f(\omega')|
\\
&\qquad \leq |P_{t} f(\omega) - P_{t}^\Lambda f(\omega)|
+|P_{t}^\Lambda f(\omega) - P_{t}^\Lambda f(\omega')|
+|P_{t}^\Lambda f(\omega') - P_{t}f(\omega')|.
\end{aligned}
\end{equation}
Using the computations in \eqref{PtLambdadifference}, there exists $r>0$ such that if $\Lambda= B_{rt}(0) = \{l\in\ZZ^d : |l| < rt\}$,  and letting $\Lambda'\to\infty$, then we have
\begin{equation} \label{infdimdecay1}
\norm{P_t f - P_t^\Lambda f}_\infty \leq C_1(f) e^{-c_1 t},
\end{equation}
which holds for all $t \geq t_0$, where $t_0$ is chosen such that $\Lambda(f) \subset \Lambda$.
This deals with the first and the third terms on the right hand side of the inequality \eqref{3ineqinfdimmeas} above. We are thus only left to study the middle term.

If $\omega$ and $\omega'$ differ by just one coordinate, say $\omega_j=\omega'_j$ for all $j\neq l$, then let $\gamma_{\omega_l,\omega'_l}(s) = \omega_l + s(\omega'_l-\omega_l)$. We can then compute
\begin{align*}
|P_{t}^\Lambda f(\omega) - P_{t}^\Lambda f(\omega')| 
&= \left| \int_0^1 \nabla^{(l)}(P_{t}^\Lambda f)(\gamma_{\omega_l,\omega'_l}(s)) \cdot(\omega'_l-\omega_l) \diff s\right|
\\
&\leq |\omega'_l-\omega_l| \int_0^1 |\nabla^{(l)}(P_{t}^\Lambda f)(\gamma_{\omega_l,\omega'_l}(s))| \diff s
\\
&\leq |\omega'_l-\omega_l| \norm{\tilde{\Gamma}^{(l)}(P_{t}^\Lambda f)}_\infty^{1/2}
\\
&\qquad
+ \sqrt{2} |\omega'_l-\omega_l| \Suma k_\alpha \int_0^1 \left|\frac{P_{t}^\Lambda f(\gamma_{\omega_l,\omega'_l}(s))-P_{t}^\Lambda f(\sigma_\alpha(\gamma_{\omega_l,\omega'_l}(s)))}{\langle \alpha, \gamma_{\omega_l,\omega'_l}(s))\rangle} \right|.
\end{align*}
Using the same method as in Step 2 in the proof of Theorem \ref{invmeasthm}, since $\gamma <\frac{1}{2}$, then we obtain
\begin{align} \label{ptdiffstep1}
|P_{t}^\Lambda f(\omega) - P_{t}^\Lambda f(\omega')|
\leq C_2 |\omega_l-\omega'_l| \norm{\tilde{\Gamma}^{(l)}(P_{t}^\Lambda f)}_\infty^{1/2}.
\end{align}

Going back to the proof of \eqref{infdimineq2}, note that we have in fact
\begin{equation*} 
\norm{\tilde{\Gamma}^{(l)}(f_t)}_\infty
\leq e^{2\tilde{\eta} t} \norm{\tilde{\Gamma}^{(l)}f}_\infty 
+ |G| \sum_{j\in\Lambda} \tilde{E}_{l,j} \int_0^t \norm{\tilde{\Gamma}^{(j)}(f_u)}_\infty \diff u.
\end{equation*}
Using the same iteration procedure as above, this implies that
\begin{equation} \label{infdimgradboundstep1}
\norm{\tilde{\Gamma}^{(l)}(f_t)}_\infty
\leq e^{2\tilde{\eta} t + tC_l} \sum_{j\in\ZZ^d} \norm{\tilde{\Gamma}^{(j)}(f)}_\infty.
\end{equation}
By our assumption $2\tilde{\eta}+C_l \leq 2 \tilde{\eta} + \tilde{C} < 0$, so this inequality provides the sufficient decay required in \eqref{ptdiffstep1}. 

We now turn to the case of general $\omega, \omega'\in \tilde{\Omega}_\epsilon$. First, we note that if $\Lambda(f)\subset \Lambda$, then by \eqref{infdimineq2} we have $\Lambda(P_t^\Lambda f) \subset \Lambda^R := \{ l \in \ZZ^d : d(l,\Lambda)<R\}$, where $R$ is the range of interaction. But $\Lambda^R$ is a finite set, say $|\Lambda^R|=n$, and let $l_1, \ldots, l_n$ be its distinct elements. We construct the sequence $\omega^0= \omega$, $\omega^{i+1}_j =\omega^i_j$ for all $j\neq l_i$, and $\omega^{i+1}_{l_i} = \omega'_{l_i}$. In other words, each two consecutive terms of the sequence differ in just one coordinate, $l_i$, and $\omega^n_j = \omega'_j$ for all $j\in \Lambda^R$. Thus, by the observation above, $P_t^\Lambda(\omega^n)=P_t^\Lambda(\omega')$, and so we have
$$ |P_{t}^\Lambda f(\omega)- P_t^\Lambda f(\omega')| 
\leq \sum_{i=0}^{n-1} |P_{t}^\Lambda f(\omega^{i+1})- P_t^\Lambda f(\omega^i)|.$$
We estimate each term in this sum using the same method as above. More precisely, using \eqref{ptdiffstep1} and \eqref{infdimgradboundstep1}, we have
\begin{align}
\sum_{i=0}^{n-1} |P_{t}^\Lambda f(\omega^{i+1})- P_t^\Lambda f(\omega^i)|
&\leq C_2 \sum_{i=0}^{n-1} |\omega_{l_i} - \omega'_{l_i}| e^{(\tilde{\eta}+\tilde{C}/2)t}  \left(\sum_{j\in\ZZ^d} \norm{\tilde{\Gamma}^{(j)}(f)}_\infty\right)^{1/2} \notag
\\
&\leq C_3(f) e^{(\tilde{\eta}+\tilde{C}/2)t}  \sum_{l\in B_{rt+R}(0)} \frac{|\omega_l|+|\omega'_l|}{(1+|l|)^\epsilon} (1+|l|)^\epsilon \notag
\\
&\leq C_3(f)(C_\omega + C_{\omega'} ) e^{(\tilde{\eta}+\tilde{C}/2)t}  (1 + |R+rt|)^\epsilon \label{infdimdecay2}
\end{align}
where $C_\omega = \sum_{l\in\ZZ^d} \frac{|\omega_l|}{(1+|l|)^\epsilon}$ and $C_{\omega'}= \sum_{l\in\ZZ^d} \frac{|\omega'_l|}{(1+|l|)^\epsilon}$ which are both finite since $\omega,\omega'\in \tilde{\Omega}_\epsilon$.

From \eqref{infdimdecay1} and \eqref{infdimdecay2} we obtain the decay we require.
\end{proof}

\noindent \textbf{Acknowledgements.} The author wishes to thank Boguslaw Zegarlinski for introducing him to the problem and for useful advice. Financial support from EPSRC is also gratefully acknowledged. 


\bibliographystyle{plain}
\bibliography{ref}

\end{document}